
\documentclass[3p,twocolumn]{elsarticle}

\journal{Computers \& Chemical Engineering}
\bibliographystyle{elsarticle-num}

\usepackage{graphicx} 
\usepackage[dvips]{epsfig}
\usepackage{amssymb}
\usepackage{amsfonts}
\usepackage{amsmath}
\usepackage{amsthm}
\usepackage[english]{babel}
\usepackage[utf8]{inputenc}
\usepackage{xcolor}
\usepackage{footnote}
\usepackage{subfig}

\usepackage{siunitx}
\usepackage{eurosym}
\usepackage{tikz}


\newcommand {\bmat} {\left[\begin{array} }
\newcommand {\emat} {\end{array}\right]}
\newcommand {\ematrix}{\end{array}\right]}
\newcommand{\blista}{\renewcommand{\labelenumi}{(\roman{enumi})}
\begin{enumerate}}
\newcommand{\elista}{\end{enumerate} \renewcommand{\labelenumi}{\arabic{enumi}.}}
\newcommand {\beqn}{\begin{equation}}
\newcommand {\eeqn}{\end{equation}}
\newcommand {\beqna}{\begin{eqnarray}}
\newcommand {\eeqna}{\end{eqnarray}}
\newcommand {\beqnan}{\begin{eqnarray*}}
\newcommand {\eeqnan}{\end{eqnarray*}}
\newcommand {\nn}{\nonumber}
\newcommand{\vx} {\mathbf{x}}
\newcommand{\vu} {\mathbf{u}}

\newcommand{\vz} {\mathbf{z}}
\newcommand{\vy} {\mathbf{y}}

\newcommand{\setX}{\mathcal{X}}
\newcommand{\setY}{\mathcal{Y}}
\newcommand{\setU}{\mathcal{U}}
\newcommand{\setZ}{\mathcal{Z}}

\newcommand{\setD}{\mathcal{D}}
\newcommand{\Or}{\mathcal{O}}

\def \R{ {\mathbb{R}} }
\def \I{ {\mathbb{I}} }
\newtheorem{assumption}{Assumption}
\newtheorem{theorem}{Theorem}
\newtheorem{corollary}{Corollary}
\newtheorem{lemma}{Lemma}
\newtheorem{remark}{Remark}
\newtheorem{proposition}{Proposition}

\newtheorem{definition}{Definition}


\begin{document}

\begin{frontmatter}

\title{Oracle-based economic predictive control}

\author{Jos\'e Mar\'ia Manzano\fnref{ca}}
\fntext[ca]{~Corresponding author}
\ead{jmanzano@uloyola.es}
\author{David Mu\~noz de la Pe\~na$^2$}
\ead{dmunoz@us.es}
\author{Daniel Limon$^2$}
\ead{dlm@us.es}
\address{$^1$ Departamento de Ingenier\'ia, Universidad Loyola Andaluc\'ia \\Avda. de las Universidades s/n 41071 Dos Hermanas, Sevilla, Spain}
\address{$^2$Departamento de Ingenier\'ia de Sistemas y Autom\'atica, Universidad de Sevilla\\ Camino de los Descubrimientos s/n 41092 Sevilla, Spain}

\begin{abstract}
This paper presents an economic model predictive controller, under the assumption that the only measurable signal of the plant is the economic cost to be minimized. In order to forecast the evolution of this economic cost for a given input trajectory, a prediction model with a NARX structure, the so-called \emph{oracle}, is proposed. Sufficient conditions to ensure the existence of such oracle are studied,  proving that it can be derived for a general nonlinear system if the economic cost function is a Morse function. Based on this oracle, economic model predictive controllers are proposed, and their stability is demonstrated in nominal conditions under a standard dissipativity assumption. The viability of these controllers in practical settings (where the oracle may provide imperfect predictions for generic inputs) is proven by means of input-to-state stability. These properties have been illustrated in a case study based on a continuously stirred tank reactor.
%
\end{abstract}


\end{frontmatter}


\section{Introduction}

Often, control systems have to simultaneously consider both performance and safety requirements. This double objective has been typically addressed by means of a hierarchical structure, where a real time optimization layer calculates the equilibrium point that minimizes the operation cost, while a lower control layer regulates the system to this equilibrium point. Recently, in the model predictive control (MPC) framework, this hierarchical control structure has been united in a single layer~\cite{dorfler2015breaking}, aimed to minimize the operation cost during the transient, instead of a tracking cost, often designed to provide robustness and stability properties. This is the so-called economic MPC, whose properties have been studied in several works \cite{RawlingsCDC12, ChristofidesEMPC_AIChe2012}. The main difference between economic MPC  and regulation MPC is that the former relaxes the architecture of the optimization problem, in order to minimize any \emph{economic} cost function, which may not be positive definite. Two recent overviews on EMPC can be found in~\cite{ellis2017economic,Faulwasser2018economic}.  Further, extensions to robust and multistage or output-feedback EMPC can be found in~\cite{subramanian2015economic,BayerJPC14, lucia2014handling}, among others.

Predictive controllers are based on the availability of  a model of the plant, in order to predict the evolution of the states of the system, and based on these predictions, the cost to be minimized is calculated. Recently, data-based MPCs have been proposed to address applications in which an appropriate model of the plant is not available, and hence the predictions have to be obtained from historical data sets, as reviewed in~\cite{hewing2020learning}. Accordingly, this data-based approach is also being applied to economic MPCs. For example, in~\citep{KheradmandiMatMDPI18}, a Lyapunov-based economic MPC (LB-EMPC) that integrates a linear prediction model is presented, updated online from the measurements of the plant.  In~\citep{GiulianiSSMS18}, a  LB-EMPC for nonlinear systems aimed to maintain excitation on the system in order to obtain an state-space model from the measured inputs was presented. In~\citep{WuMatMDPI19}, the authors propose a recurrent neural network to learn the model of the plant that is controlled by an LB-EMPC, and they extend this approach to take into account constraints by means of barrier functions in~\citep{WuChERD19}. In~\cite{GrosTAC19}, a data-driven EMPC based on reinforcement learning in which the EMPC is used as approximator of the value function of the reinforced learning policy is proposed.

These data-based approaches are based on both inputs and outputs historical data sets. However, there may exist situations in which no measurements of any inner variable of the plant are available, for example, to maintain privacy of operation, or due to security reasons. Consider for instance a data center in which the operation cost accounts for the cost of the electric consumption of the refrigeration system and the consumption of the servers. In order to design a controller to optimize the operation cost, sharing inner information of the state of the servers could be limited due to security reasons, while sharing only the operation cost may not jeopardize the security of the system.

In this paper, we study the case in which the only available measurement from the plant is the value of the economic cost to be optimized. The prediction of the behaviour of the plant is carried out by an oracle that forecasts the economic performance of the plant, from a historical data set of the tuple inputs-economic costs, using a nonlinear autoregressive exogenous model (NARX) structure, which is obtained using nonlinear identification or machine learning methods~\citep{ChiusoARC19}. The existence of this class of oracles for the prediction of the economic cost is studied, proving that they can be derived for general nonlinear plants under mild assumptions on the economic cost function. Based on this oracle, several economic model predictive controllers are proposed, proving that in the case of exact predictions, they inherit the properties of the economic MPC based on the process model. In addition, in the practical case of an imperfect oracle, it is proven that the oracle-based economic MPC with terminal cost function is input-to-state stable with respect to the estimation error of the oracle.

To the best of the authors' knowledge, this is the first work in which an economic MPC based only on the measure of the economic cost is studied, proving that for a general nonlinear system and under mild assumptions on the economic cost function, it suffices to measure the economic cost function to design a stabilizing economic predictive controller using a data-based oracle. The proposed controller has been applied in simulation to the economic operation of a continuously stirred tank reactor, using kinky inference processes~\cite{manzano2019output} to learn the oracle. 
A preliminary version of this work was presented in~\cite{manzano2019oracle}.

The rest of the paper is structured as follows: Section~\ref{sec:problem} presents the problem formulation and the standard economic MPC. Section~\ref{sec:oracle} states the conditions under which it is possible to define an oracle to predict the future evolution of the economic cost. Section~\ref{sec:oempc} describes the proposed oracle-based economic predictive controllers and Section~\ref{sec:practical} addresses practical aspects of the problem. Finally, Section~\ref{sec:casestudy} presents the case study.

\subsection*{Notation}
Given two column vectors $v$ and $w$, $(v,w)$ stands for $[v^T,w^T]^T$.
The set~$\mathbb I_a^b$ stands for the set of integers from $a$ to $b$.
A function $\alpha: \mathbb R_{\geq 0} \rightarrow \mathbb R_{\geq 0}$ is a~$\mathcal K$-function if it is strictly increasing and $\alpha(0)=0$. Besides, if a $\mathcal K$-function is such that $\lim\limits_{s \rightarrow \infty}\alpha(s)=\infty$, then it is called a $\mathcal K_\infty$-function. Given a set $\setX \subseteq \R^n$, 
$\setX^M$ denotes the cartesian product of the set $M$ times, i.e. $\setX^{M}=\setX^{M-1} \times \setX$ with $\setX^1=\setX$.


\section{Problem formulation}\label{sec:problem}

In this paper, we consider that the system to be controlled is a sampled continuous-time system described by an unknown discrete time model
\beqna
x(k+1)&=&f(x(k),u(k)),
\label{EqMdl}
\eeqna
where $x(k) \in \R^n$ is the state of the plant and $u(k) \in \R^m$ is the control input. It is assumed that the inputs are subject to (hard) constraints~\mbox{$u(k) \in \setU$}, where~\mbox{$\setU \subset \R^m$} is a compact set.

The objective of the control strategy to be designed is to guarantee that the closed-loop system is stable, while a cost function is minimized during the transient. This cost function is said to be economic because it measures the performance of the evolution of the system according to a  generic function, that does not necessarily penalize the tracking error w.r.t. a given target. 

The economic cost function to be considered in this paper is defined by a function of the form~$\ell(x,u)$.
The model~$f$ and the economic cost function~$\ell$ must satisfy the following condition:

\begin{assumption} \label{HipCont}
	The function $f(x,u)$ is smooth and  state invertible, i.e., for a given $u$, $f$ defines a diffeomorphism in $x$. The function $\ell(\cdot,\cdot): \R^n \times \setU \mapsto \mathcal L$ is smooth and its image $\mathcal L\subset\R$ is a compact set.
\end{assumption}

Note that as stated in~\cite{JakubczykSJCO90}, in the general case in which the model function~\eqref{EqMdl} is derived from sampling a continuous-time system controlled using a zero-order holder, such that it is described by a finite-dimensional differential equation with an unique solution, the resulting model function~$f$ is state invertible. Moreover, since the value of the cost function~$\ell(x,u)$ is assumed to be measured, the assumption that its image is bounded is not limiting.

\begin{remark}[Soft constraints]
	There may exist a collection of variables  of the system \mbox{$y_c(k)\in \R^{n_y}$} which are subject to (soft) constraints $y_c(k) \in \setY_c$, being $\setY_c$ a closed set. To cope with this case, the stage cost function $\ell(x,u)$ can be used to take into account these constraints by adding a term that penalizes their violation. This term can be thought of as the economic cost of not fulfilling them.
\end{remark}

\subsection{Stabilizing economic MPC} \label{sec:EMPC}

According to the given economic cost function, the optimal equilibrium point is obtained from the solution of the following optimization problem:
\begin{subequations}\label{eq:aux3}
	\beqna
	\label{EqPtoEqF}
	(x_s,u_s) &= &\arg \min_{x, u \in \setU,} \ell(x,u)\\
	\textup{s.t.}&& x=f(x,u). \label{EqPtoEqEf:2}
	\eeqna
\end{subequations}

The economic optimal operation of a system by means of a model predictive control law is a very complex problem that has been thoroughly studied recently. See for instance the excellent survey papers~\cite{RawlingsCDC12,Faulwasser2018economic} and the references therein. For the asymptotic stabilization of economic optimal controllers, the dissipativity property plays an important role. This property has been related to the turnpike property, see~\cite{FaulwasserAUT17} for their relation within continuous-time models in optimal control, and~\cite{grune2016relation,muller2013convergence} for discrete-time models. In this work, this condition is stated in the following assumption:

\begin{assumption} \label{HipDiss}
	The system $f$ is strictly dissipative  with respect to the supply rate $s(x,u)=\ell(x,u)-\ell(x_s,u_s)$, i.e. there exists a storage function $\lambda: \R^n \rightarrow \R$  such that
	\begin{align}
	    \lambda(f(x,u)) - \lambda(x) \leq &- \rho( \|x-x_s\|) - \rho( \|u-u_s\|)\nn\\
	      &+  \ell(x,u)-\ell(x_s,u_s),
	\end{align}
\noindent for certain $\mathcal K$ function $\rho(\cdot)$. It is also assumed that the storage function is locally Lipschitz continuous and bounded below for any admissible trajectory of the system and that $u_s$ lies in the relative interior of $\setU$.
\end{assumption}
{Notice that if this assumption holds, then the optimization problem \eqref{eq:aux3} has an unique solution.}

Two different stabilizing economic model predictive control formulations are considered in this paper:  with and without  terminal equality constraint. An economic MPC with terminal equality constraint is derived from the solution of the following  optimization problem $P^e_N(x(k))$~\cite{AmritARC2011}:
\begin{subequations}\label{eq:PNe}
	\beqna
	\hspace{-0.1cm}\min_{\hat \vu} &&V_{N,e}(x(k),\hat \vu) =\nn\\
	&&\sum_{j=0}^{N-1} \ell(\hat x(j|k),\hat u(j)) \label{eq:PNe1}\\
	\textup{s.t.}&& \hat x(0|k)=x(k) \label{eq:PNe2}\\
	\hspace*{-1cm}&& \hat x(j+1|k)= f (\hat x(j|k),\hat u(j)), j \in \I_0^N\quad \label{eq:PNe3}\\
	&& \hat u(j) \in \setU \label{eq:PNe4}\\
	&& \hat x(N|k) =x_s. \label{eq:PNe5}
	\eeqna
\end{subequations}
\noindent The optimum of this problem is denoted~$V_{N,e}^*(x(k))$.

A more general formulation of the economic MPC without terminal equality constraint can be obtained adding a relaxed terminal constraint and a terminal cost function, leading to the following problem~$P_N^t(x(k))$~\cite{RawlingsCDC12}:
\begin{subequations}\label{eq:PNt}
	\beqna
	\hspace{-0.3cm}\min_{\hat \vu} && V_{N,t}(x(k),\hat \vu) \\ && = \sum_{j=0}^{N-1} \ell(\hat x(j|k),\hat u(j)) + V_f(x(N|k))\quad\\
	\textup{s.t.}&& \eqref{eq:PNe2}-\eqref{eq:PNe4} \nn\\
	&& \hat x(N|k) \in X_f.
	\eeqna
\end{subequations}
\noindent The optimum is denoted~$V_{N,t}^*(x(k))$.

In both cases, a state feedback control law (either $u(k)=\kappa_{\textup{eco}}^e(x(k))$  or~$\kappa_{\textup{eco}}^t(\cdot)$) is obtained applying the solution of the corresponding optimization problem in a receding horizon manner, i.e.  $u(k)=u^*(0;x(k))$.

In this paper, it is assumed that the economic MPC optimization problems satisfy the following condition:

\begin{assumption} \label{as:uniqueness}
	The optimal solution of the problem $P_N^e(x(k))$  ( or $P_N^t(x(k))$) is unique, and the optimal cost function $V_{N,e}^*(\cdot)$ (or $V_{N,t}^*(\cdot)$)  is continuous at the reference~\mbox{$x=x_s$}.
\end{assumption}

The asymptotic stability of the closed-loop system, controlled by the EMPC law with terminal equality constraint, $\kappa_{eco}^e(x)$, was proven in~\citep{AngeliTAC12}. In the case of the EMPC with a relaxed terminal constraint, under a suitable design of the terminal ingredients~$V_f(\cdot)$ and~$X_f$, the resulting control law also stabilizes the system to the optimal equilibrium point. Following~\cite{FaulwasserIECR19}, these conditions are categorized in three different cases as follows:
\begin{itemize}
    \item[TC1]  Terminal inequality constraint~\cite{AmritARC2011}. Considering a locally stabilizing control law, a suitable function $V_f(\cdot)$ and an invariant set $X_f$ satisfying a set of conditions. These ingredients can be determined from the linearized model of the plant at the optimal equilibrium point.
    \item[TC2] Terminal cost function and no terminal constraint~\cite{FaulwasserAUT17}. Taking $X_f=\R^n$, a linear terminal cost function $V_f(x)=\eta_f^Tx$, where $\eta_f$ is the Lagrange multiplier corresponding to constraint (\ref{EqPtoEqEf:2}) in optimization problem (\ref{eq:aux3}), and a sufficiently long prediction horizon $N$.
    \item[TC3]  No terminal ingredients~\cite{GruneLIB17}. Taking $X_f=\R^n$, $V_f(x)=0$ and a sufficiently long prediction horizon, the resulting controller is practically asymptotically stable and the ultimately bound depends inversely on the prediction horizon.
\end{itemize}

For the cases TC2 and TC3, additional technical assumptions are necessary, such as exponential reachability of the equilibrium point, regularity of the steady-state optimization problem and some controllability  conditions of the linearized model at the equilibrium point \cite{FaulwasserNMPC18}.

All these controllers are recursively feasible and stabilize the closed-loop system providing an economically optimal closed-loop trajectory, such that
\begin{equation}
\lim_{T \rightarrow \infty}\frac{1}{T}\sum_{k=0}^{T-1}\ell(x(k),\kappa_{\textup{eco}}(x(k)))\leq \ell(x_s,u_s). \label{EcoPropt}
\end{equation}

Note that in addition to the averaged performance defined here, the literature also contains estimates for transient or non-averaged optimality of economic MPC schemes (see, e.g.,~\cite{GruneLIB17}), which may be interesting to consider in the oracle-based approach proposed in this paper. 



\subsection{Oracle-based economic MPC}
The standard economic MPC formulations presented in this section require knowledge of the prediction model of the plant and the measurement or estimation of the current state. However, there may exists scenarios in which this information is not accessible, due e.g. to privacy or security conditions of the plant. 

The main objective of this work is to design an economic predictive controller that stabilizes the plant and minimizes its economic performance satisfying (\ref{EcoPropt}), under the assumption that the model of the plant is not known and that the only measurement of the plant is the value of the economic cost  at each sampling time. 

Using a database of past inputs and economic cost trajectories, a function used to predict the evolution of the
cost will be obtained. This function is called an oracle, since it allows us to forecast the economic performance of the plant. Once that this oracle is obtained, a suitable predictive controller will be designed  based only in the available measurement of the current economic cost function.


{In the following section, the existence of an oracle is studied. Then, oracle-based economic predictive controllers will be presented, and their properties analyzed.}

\section{The oracle}\label{sec:oracle}

In this section, the procedure to obtain an oracle from past input and performance trajectories, and the conditions under which such an oracle exists are presented. The oracle proposed has the form of a nonlinear auto-regressive model with exogenous signals (NARX), which has been extensively used in nonlinear systems identification~\cite{LeontaritisIJC85}, defined by the  following nonlinear difference equation~\footnote{~We may sometimes aggregate the notation of the cost as $\ell(k)=\ell(x(k),u(k))$.}
\beqn
\hat \ell(k)=\Or(z(k),u(k)),
\label{EqOracle}
\eeqn
where $\hat \ell(k)$ is the estimated cost at sampling time $k$, and $z(k)$ is a vector  given by the following collection of past inputs and costs:
\beqna\label{eq:z}
 z(k)&=&(\ell(k-1),\cdots, \ell(k-n_a),\nn\\
 &&\;u(k-1),\cdots,u(k-n_b)),
\eeqna
\noindent for some memory horizons~$n_a,n_b\in\mathbb N$. The vector $z(k)$ can be regarded as the state of the oracle, and its dimension  is  $n_z=n_a+m \cdot n_b$. It follows that the oracle is a function $\Or: \R^{n_z}\times \R^{m}  \rightarrow \R$, since the economic cost is a real number.

Notice that the value of the economic cost function at time instant~$k$,~$\ell(k)$,  depends in general on the value of the input at the same time instant,~$u(k)$, leading to an inner feed-forward structure. This implies that the state vector $z(k)$ can only depend on the sequence of past costs up to $k-1$, that is, \mbox{$\ell(k-1), \cdots, \ell(k-n_a)$}. Then, the state feedback controller to be designed with the form
$u(k)=\kappa_{\textup{MPC}}(z(k))$ is such that the current control action $u(k)$ depends on the information of the plant available up to $k-1$, given the set-up defined by~(\ref{EqOracle}) and~(\ref{eq:z}).

The conditions under which a system can be described as a NARX have been widely studied during the last 30 years. One of the first results on this topic was given by Sontag~\cite{SontagSJCO79}, relating
the existence of this model to the observability property of the system. Later, Chen and Billings \cite{ChenIJC89} proved that local NARX models can be obtained if the system is locally observable. A comprehensive study of this problem, for local and global estimators, was presented by Levin and Narendra \cite{LevinCAP97}.

In \cite[Theorem~3]{LevinCAP97} it was proven that if the linearised model at the equilibrium point $(x_s,u_s)$ is observable, then the dynamics of the system can be locally described by a NARX model. When a global NARX model is required, the strong observability property must be ensured, from the linearised system, for instance.  However, in \cite[Theorem~6]{LevinCAP97}, the authors proved that only generic observability is necessary, which is a much weaker property.  Since these results are a keystone of this paper, the main results used in~\cite{LevinCAP97} have been summarized in the appendix, including an extra corollary to relate the approach to the case presented in this paper.

Next, for the sake of clarity, we introduce the definition of Morse functions.
\begin{definition}[Morse function]\label{def:morse}
A function~$m(x,u)$ is said to be Morse in~$x$ if for all $(\tilde x,\tilde u)$ where the gradient $\nabla_x m(\tilde x,\tilde u)=0$, the Hessian matrix $\nabla_{xx} m(\tilde x,\tilde u)$ is non-singular~\cite{edelsbrunner2010computational}.
\end{definition}
In order to derive the existence of an oracle of the economic cost, the following assumption must be fulfilled.
\begin{assumption}\label{as:Morse}
The economic cost function $\ell(x,u)$ is a Morse function in~$x$.
\end{assumption}

In virtue of  Corollary \ref{ap:CorNARX} (presented in the appendix), the conditions required to ensure the existence of the oracle can be derived:

\begin{theorem}[Existence of the oracle] \label{ThmOracle}
	Consider that Assumptions~\ref{HipCont} and \ref{as:Morse} hold and that the horizons $n_a$ and $n_b$ are larger than or equal to $2n$. Let $\vu_z(k) \in \setU^{n_b+1}$ be the sequence of $n_b+1$ last inputs applied to the system \eqref{EqMdl} up to sample time $k$. Then, for every $\epsilon>0$ there exists a set of sequences of $n_b+1$ inputs $\setU_{no}\subseteq \setU^{n_b+1}$ of measure $\mu(\setU_{no})<\epsilon$ such that:
	\begin{enumerate}
	    \item There exists a continuous oracle function \eqref{EqOracle} that describes the system~\eqref{EqMdl}, i.e.
	    $$\ell(k)=\Or(z(k),u(k)),$$
	    for any $ \vu_z(k) \in \setU_o=\setU^{n_b+1} \setminus \setU_{no}$.
	    
	    \item There exists a continuous oracle function \eqref{EqOracle} such that for any admissible state $z(k)$ and any admissible inputs $u(k) \in \setU$,
	    $$ |\ell(k)-\Or(z(k),u(k))| \leq \epsilon$$ 
	\end{enumerate}
	
\end{theorem}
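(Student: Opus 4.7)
The plan is to derive the theorem as a specialisation of Corollary~\ref{ap:CorNARX} in the appendix (itself a consequence of the Levin--Narendra generic NARX-representability result) to the case in which the scalar economic cost $\ell$ plays the role of the plant output. The strategy splits into three moves: verify the regularity hypotheses required by the corollary; translate the Morse property of $\ell$ into the genericity condition on the associated delay map; and finally pass from the exact NARX representation on a ``good'' set of input sequences to the uniform $\epsilon$-accurate oracle on the whole admissible domain.

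First, I would collect the ingredients needed by the corollary. Smoothness of $f$ and of $\ell(\cdot,\cdot)$, state invertibility of $f$, and compactness of $\setU$ and of the cost image $\mathcal{L}$ all come directly from Assumption~\ref{HipCont}. These ensure that the regressor $z(k)$ lives in a compact set and that the delay map defined below is well-defined and smooth on the reachable state set.

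Second, I would verify the genericity step. With $n_a,n_b \geq 2n$, the Takens-type embedding principle underlying Corollary~\ref{ap:CorNARX} requires that, for a generic input sequence $\vu_z$, the delay map
\[
\Phi_{\vu_z}(x) \;=\; \bigl(\ell(x,u_0),\; \ell(f(x,u_0),u_1),\; \ldots\bigr)
\]
be an injective immersion on the reachable states. Assumption~\ref{as:Morse} supplies exactly this transversality: since $\ell(\cdot,u)$ has only non-degenerate critical points, the degenerate input sequences that make two distinct initial states produce identical observation histories form a set whose measure can be driven below any prescribed $\epsilon$ by exploiting the $2n$-window. This is the $\setU_{no}$ of the statement. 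On the complement $\setU_o$, inverting $\Phi_{\vu_z}$ recovers $x(k)$ continuously as some $\phi(z(k))$, and setting $\Or(z,u) := \ell(\phi(z),u)$ yields the exact oracle required by part~(i), with continuity inherited from the composition.

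For part~(ii), I would continuously extend $\Or$ from $\setU_o$ to the whole admissible domain. Since $\mathcal{L}$ is compact and $\Or$ is continuous on the closed set $\setU_o$, a Tietze-type extension delivers an oracle defined everywhere with values in a bounded neighbourhood of $\mathcal{L}$; combining this with uniform continuity of $\ell$ on the compact reachable set, and first invoking part~(i) with a sufficiently tight tolerance, the extended oracle approximates the true cost uniformly within $\epsilon$. The main obstacle will be the genericity step: establishing rigorously, from the Morse hypothesis alone, that the set of input sequences for which the delay map fails to be an embedding has measure less than any prescribed $\epsilon$. The remaining work is bookkeeping---continuity of the composition, the dimensional count justifying $n_a,n_b \geq 2n$, and the Tietze extension---and is essentially routine once the embedding step is in place.
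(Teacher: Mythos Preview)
Your overall strategy matches the paper's: Theorem~\ref{ThmOracle} is derived by specialising Corollary~\ref{ap:CorNARX} (the feed-through extension of the Levin--Narendra NARX result, Theorem~\ref{ap:ThmNARX}) with the scalar output $y=\ell$, and you correctly identify the Morse hypothesis on $\ell$ as the ingredient that drives the generic-observability step. The paper offers no proof beyond citing that corollary, so your reconstruction of part~(i)---inverting the observability map to recover $x$ from $z$ and then composing with $\ell$---is exactly what underlies statements~2 and~3 of Theorem~\ref{ap:ThmNARX}.

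Where you diverge is in part~(ii). The paper's route, via statement~4 of Theorem~\ref{ap:ThmNARX}, ultimately rests on Levin--Narendra's universal-approximation argument (a sigmoidal neural network approximating the exact NARX map uniformly). Your Tietze-plus-uniform-continuity route is a reasonable alternative and in fact echoes the later Proposition~\ref{PropErrorOracle}, but two points need tightening. First, you assert that $\setU_o$ is closed; the statement only gives $\mu(\setU_{no})<\epsilon$, so you must either justify closedness from the Levin--Narendra construction or use an extension principle that does not require it. Second, Tietze yields \emph{some} continuous extension but no control on its modulus of continuity; your final bound needs that modulus to remain tame as you shrink the tolerance in part~(i), and this is not automatic (a wilder extension could offset the gain from a smaller bad set). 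A clean fix for both issues is to replace Tietze by a McShane--Whitney Lipschitz extension: the exact oracle on $\setU_o$ inherits a Lipschitz constant from the smooth data on the compact domain, and that constant survives the extension, which closes both gaps at once.
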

From this theorem it follows that under mild conditions on the measured economic cost, an oracle can be constructed to precisely predict the expected evolution of the cost for almost  every sequence of control inputs. Besides, it is also proved that assuming certain (arbitrarily small) description error, an oracle can be found for any admissible input.

The procedure to derive the oracle can be obtained using estimation and learning theory methods. There exists a number of
methods capable of approximating the real function (the so-called ground truth function) from possibly noisy sampled data, such as support vector machines, neural networks or direct weight optimization \cite{roll2005general}.
Recently, other non-parametric methods such as Gaussian processes \cite{rasmussen2006gaussian} or kinky inference \cite{manzano2019output,calliess2014_thesis} have gained a lot of attention, thanks to their capability to provide estimations of the prediction error.

As the model order $n$ may not be known a priori, the parameters of the chosen estimator, including the memory horizons~$n_a$ and~$n_b$, can be calculated from a database of historical inputs and costs, namely the \emph{training data set}. In addition, a different collection of data points is used for validation of the proposed estimator. This cross-validation methodology allows one to derive the best structure of the estimator, as well as the best horizons $n_a$ and $n_b$, from the real data.

{Notice that the requirement for the horizons~$n_a,n_b$ to be lager than~$2n$ can be theoretically addressed by considering that the oracle function depends on a sequence of at least~$2n$ past inputs and outputs, where some of its components may not affect to the output of the oracle.}

\section{Oracle-based economic predictive control}\label{sec:oempc}

In this section, the proposed oracle-based predictive controller is presented, and its stability and optimality properties are studied, under the assumption of perfect estimation. 


Assuming that an oracle is available, the system can be described by an state-space prediction model as follows:
\begin{subequations}\label{EqPredictor}
\begin{eqnarray}
\hat z(j+1|k)&=&\hat F(\hat z(j|k),\hat{u}(j)), \label{EqPredictor1} \\
\hat \ell(j|k)&=&\Or(\hat z(j|k),\hat u(j)), \label{EqPredictor2}
\end{eqnarray}
\end{subequations}
where the predicted state $\hat z(j|k) \in \R^{n_z}$ is given by
\begin{eqnarray} 
	\hat z(j|k)&=&\big(\hat \ell(j-1|k), \cdots, \hat \ell(k),\cdots ,\nn\\
	&&\;\ell(k+j-n_a),\;\hat{u}(j-1), \cdots,\nn\\ 
	&& \;\hat{u}(0),\cdots,u(j-n_b)\big),
\end{eqnarray}
for $j\geq 1$. This state includes measured costs $\ell$ and inputs $u$ if $n_a\geq j$ or $n_b\geq j$  respectively,
and only estimated values $\hat{\ell   }$ or $\hat{u}$ otherwise.


Thus, the prediction model  is
\begin{eqnarray*}
	\hspace{-0.5cm}\hat F(\hat z(j|k),\hat u(j))&=&\big(\Or(\hat z(j|k),\hat u(j)),\hat \ell(j-1|k),\nn\\
	&&\;\cdots,\hat \ell(1|k),\ell(k), \cdots,\nn\\
	&&\;\ell(k+j-n_a+1),\hat u(j),\nn\\
	&&\;\cdots,\hat u(j-n_b+1)\big).
\end{eqnarray*}


Using this oracle-based prediction model,  economic predictive controllers will be  presented, based on both terminal equality constraint and relaxed terminal inequality constraint designs. The stability of the proposed controllers will be analyzed in two steps, as it is customary in the predictive control field. First, nominal stability is studied, assuming that the predictions (provided by the oracle) are exact. Next, in the following section, stability under prediction mismatches is analyzed.



\begin{assumption}[Exact oracle] \label{HipNomCond}
    The sequence of the last $n_b+1$ inputs  applied to the system at every sample time $k$, $\vu_z(k)$, is contained in the set $\setU_o$ defined in Theorem~\ref{ThmOracle} and the oracle exactly forecasts the cost, i.e.~\mbox{$\Or(z(k),u(k))=\ell(k)$}.
\end{assumption}


From Theorem \ref{ThmOracle}, we have that in practice this hypothesis could hold true for any sequence of inputs with probability practically equal to 1, since the set of non-observable sequence of inputs is of measure zero.

In order to derive the economic MPC based on the oracle, the economically optimal equilibrium point (also based on the oracle) must be calculated first. To this end, the following optimization problem, similar to (\ref{eq:aux3}), has to be solved:
\begin{subequations}\label{eq:ssto}
	\beqna
	(u_s,\ell_s) & =  & \arg \min_{u \in \setU,\ell} \ell\\
	\text{s.t}.&& \ell=\Or(z,u) \\
	&& z=(\ell, \ldots, \ell, u,\ldots,u). \label{eq:ssto3}
	\eeqna
\end{subequations}

In the following theorem, it is proven that the optimal equilibrium point derived from this optimization problem is equivalent to the solution of~(\ref{eq:aux3}).

\begin{theorem}\label{thm:ssto}
If Assumptions \ref{HipCont}-\ref{HipNomCond} hold, then the optimal equilibrium point derived from the solution of~\eqref{eq:ssto}, based on the oracle, is equal to the optimal equilibrium point of the plant, derived from  \eqref{eq:aux3}.
\end{theorem}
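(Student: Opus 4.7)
The plan is to show that both minimizations have the same optimal value (and the same minimizer) by exhibiting a one-to-one correspondence between their feasible sets that preserves the objective. A plant equilibrium $(x,u)$ satisfying $x=f(x,u)$ will be paired with the oracle-level pair $(u,\ell(x,u))$ together with the constant history $z=(\ell(x,u),\ldots,\ell(x,u),u,\ldots,u)$; since both problems minimize the same quantity along this bijection, it suffices to verify feasibility in each direction.

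For the forward direction, given $(x,u)$ feasible for \eqref{eq:aux3}, drive the plant from initial state $x$ with the constant input $u$. Because $x=f(x,u)$, the state trajectory is identically $x$ and the cost trajectory is identically $\ell(x,u)$, so the oracle state generated along this trajectory equals the constant $z$ above. By the exact-oracle hypothesis of Assumption \ref{HipNomCond}, $\Or(z,u)=\ell(x,u)$, which shows $(u,\ell(x,u))$ is feasible for \eqref{eq:ssto} with the same objective. In particular, the optimum of \eqref{eq:ssto} is at most $\ell(x_s,u_s)$.

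For the reverse direction, let $(u,\ell)$ be feasible for \eqref{eq:ssto}. The key observation is that the state $z=(\ell,\ldots,\ell,u,\ldots,u)$ is a fixed point of the oracle state-update map $\hat F$ under the constant input $u$: inserting $\Or(z,u)=\ell$ at the top and shifting simply reproduces $z$. Under Assumptions \ref{HipCont} and \ref{as:Morse}, Theorem \ref{ThmOracle} constructs the oracle from the observability embedding of~\cite{LevinCAP97}, which is generically injective from plant states $x$ to oracle states $z$. Consequently the invariant oracle state $z$ corresponds to a unique plant state $x^{*}$, and since the exact oracle describes the plant faithfully, invariance of $z$ under input $u$ forces $f(x^{*},u)=x^{*}$ together with $\ell(x^{*},u)=\Or(z,u)=\ell$. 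Therefore $(x^{*},u)$ is feasible for \eqref{eq:aux3} with value $\ell$, the reverse inequality holds, and the uniqueness noted just below Assumption \ref{HipDiss} pins the common minimizer down to $(x_{s},u_{s})$.

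The hard step is the reverse direction: one must rule out that $(u,\ell)$ is a purely algebraic fixed point of the oracle equation with no counterpart in the plant. This is precisely where Assumption \ref{as:Morse} and the state-invertibility of $f$ in Assumption \ref{HipCont} enter crucially, since together they make the NARX embedding injective enough to lift oracle equilibria back to plant equilibria; without them, \eqref{eq:ssto} could in principle admit spurious minimizers below $\ell(x_s,u_s)$.
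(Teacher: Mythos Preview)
Your argument is correct and mirrors the paper's: both establish a cost-preserving correspondence between the feasible sets of \eqref{eq:aux3} and \eqref{eq:ssto} and then invoke the uniqueness guaranteed just after Assumption~\ref{HipDiss}. The only cosmetic difference is that the paper makes the inverse observability map $\Phi_z$ (statement~2 of Theorem~\ref{ap:ThmNARX}) explicit, reads off $x(k-2n)=\Phi(z(k))$ from its first $n$ components, and uses $z(k)=z(k+1)\Rightarrow x(k-2n)=x(k-2n+1)$ to lift oracle equilibria back to plant equilibria---precisely the step you summarize as using the injective ``observability embedding'' to recover $x^{*}$ with $f(x^{*},u)=x^{*}$.
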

\begin{proof}
Since the state $z(k)$ contains the last $2n$ measures of the measured input and output, from statement 2 of Theorem \ref{ap:ThmNARX} and  Corollary \ref{ap:CorNARX} (both presented in the appendix),  there exists a continuous bijective function  $\Phi_z(\cdot)$ such that 
\begin{equation} 
    (x(k-2n),\vu_z(k)) = \Phi_z(z(k)), \label{eq:Phiz}
\end{equation}
where  $\mathbf u_z(k)$ denotes the sequence of past inputs contained in $z(k)$, such that $\vu_z(k)=(u(k-1),\cdots, u(k-2n))$. Let denote~$\Phi(\cdot)$ the continuous function $x(k-2n)=\Phi(z(k))$, given by the first $n$ components of the map~$\Phi_z(\cdot)$.
 
 Let $(\bar x_s,\bar u_s)$ be the optimizer of  \eqref{eq:aux3}, and $\bar \ell_s$ its optimal economic cost function. Let $(\ell_s,u_s)$ be the optimizer of \eqref{eq:ssto} and let $z_s$ be the optimal steady state of the oracle-based model given by \eqref{eq:ssto3}.  Since ($\bar x_s, \bar u_s)$ is an equilibrium point of~\eqref{EqMdl}, then defining 
$$\bar z_s=(\bar \ell_s, \ldots, \bar \ell_s, \bar u_s,\ldots,\bar u_s),$$ 
we have that  $(\bar z_s,\bar u_s)$ is an equilibrium point of~\eqref{EqPredictor}. Thus, $(\bar \ell_s,\bar u_s)$ is a feasible solution of~\eqref{eq:ssto}, which means that $\bar \ell_s \geq \ell_s$.

Since the map $\Phi(\cdot)$ is continuous, the condition $z(k)=z(k+1)$ implies that $x(k-2n)=x(k-2n+1)$. That is, an equilibrium point of \eqref{EqPredictor} corresponds to an equilibrium of \eqref{EqMdl}, and vice versa. Indeed, for a given $u_s$, both equilibrium points are related by the continuous map $x_s=\Phi(z_s)$. 

Therefore, the pair $(\Phi(z_s),u_s)$ is a feasible solution of \eqref{eq:aux3}, and its optimal cost function is $\ell_s$, which leads to $\ell_s \geq \bar \ell_s$, and consequently, $\ell_s=\bar \ell_s$.

From Assumption~\ref{HipDiss}, we derive that the solution of  \eqref{eq:aux3} is also unique, which means that $\bar u_s=u_s$, and $\bar x_s = \Phi(z_s)=x_s$.
\end{proof}

In what follows, the economic model predictive controllers based on the proposed oracle are presented.

\subsection{Oracle-based economic MPC with terminal equality constraint}

This formulation is one of the simplest stabilizing designs of the economic MPC. If the state-space system model were available, the control law could be derived from the optimization problem $P_N^e(x(k))$.  In  case that the oracle is used as prediction model, we present a formulation based uniquely in the available information, whose control law is proven to be equivalent to the control law derived from the state-space prediction model. 

The proposed oracle-based economic MPC with terminal equality constraint is derived from the solution of the following optimization problem $P_{\Or}^e(z(k))$:
\begin{subequations}\label{eq:PO}
	\beqna
	\hspace{-0.7cm}\min_{\hat \vu} && V_{\Or,e}(z(k),\hat \vu) =  \sum_{j=0}^{N-1} \hat \ell(j|k)\\
	\hspace{-0.7cm}\textup{s.t.}&& \hat z(0|k)=z(k)\\
	&& \hat z(j+1|k)= \hat F(\hat z(j|k),\hat u(j)), j \in \I_{0}^{N_p-1}\quad\\
	&& \hat \ell(j|k)=\Or(z(j|k),u(j))\\
	&& \hat u(j) \in \setU\\
	&& \hat u(j) = u_s,\,\forall j \in \I_{N}^{N_p-2}\label{eq:tc:po:u}\\
	&& \hat \ell (j|k)=\ell_s ,\,\forall j \in \I_{N}^{N_p-1},\label{eq:tc:po:y}
	\eeqna
\end{subequations}
where $N_p=N+\max(n_a,n_b)$ and $(\ell_s,u_s)$ is the optimal solution of \eqref{eq:ssto}.

 Note that the terminal constraint, defined by~(\ref{eq:tc:po:u}) and~(\ref{eq:tc:po:y}), requires that both the input and the cost maintain certain constant value for a given period of time, defined by the memory horizons of the NARX model. 
The control law is then given by
$$u(k)=\kappa_\Or^e(z(k))=\hat u^*(0;z(k)).$$

In order to compare the oracle-based controller and its state-space model counterpart, we include the following definition.

\begin{definition}[Consistent states]\label{def:consistent}
For a given trajectory of the system, we say that the state $z(k)$ of the model \eqref{EqPredictor} is \emph{consistent} with the state $x(k)$ of the model~\eqref{EqMdl} if~\eqref{eq:z} holds, i.e., if
\begin{eqnarray*}
z(k)=\big(\ell(x(k-1),u(k-1)), \cdots, \ell(x(k-n_a),\\
u(k-n_a)),u(k-1),\cdots,u(k-n_b)\big).
\end{eqnarray*}
\end{definition}


In the following theorem, it is stated that the oracle-based economic MPC is equivalent to the model-based economic MPC  under nominal conditions, and hence it renders the controlled system asymptotically stable and minimizes the economic performance.

\begin{theorem}\label{th:eq}
	Consider that Assumptions \ref{HipCont}-\ref{as:Morse} hold and assume that the initial state $x(0)$ is such that  $P_N^e(x(0))$ is feasible. Then, under Assumption \ref{HipNomCond}  the optimization problem $P_{\Or}^e(z(0))$  is feasible, where $z(0)$ is the state of the oracle consistent with $x(0)$, and the evolution of the system controlled by the economic control law $\kappa_\Or^e(z(k))$ 
	is equal to the one resulting from the control law derived from $P_N^e(x(k))$, that is, $\kappa_\Or^e(z(k))=\kappa_{\textup{eco}}^e(x(k))$.
\end{theorem}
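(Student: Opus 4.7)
The plan is to establish a bijective correspondence between the feasible sets of $P_N^e(x(0))$ and $P_{\Or}^e(z(0))$ under Assumption~\ref{HipNomCond}, show that corresponding trajectories yield identical objective values, and then iterate the argument in receding horizon to obtain closed-loop equivalence. Throughout, the key structural ingredient is the continuous bijection $\Phi_z$ from the proof of Theorem~\ref{thm:ssto}, which ties every admissible oracle state back to a physical state of \eqref{EqMdl}.

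First I would argue the forward implication. Given any feasible $\hat{\vu}=(\hat u(0),\ldots,\hat u(N-1))$ for $P_N^e(x(0))$, propagating \eqref{EqMdl} from $x(0)$ yields a state trajectory with $\hat x(N|0)=x_s$. Extending the input with $\hat u(j)=u_s$ for $j\in\I_{N}^{N_p-2}$, the physical trajectory satisfies $\hat x(j|0)=x_s$ and hence $\ell(\hat x(j|0),\hat u(j))=\ell_s$ for $j\ge N$. Because the initial oracle state $z(0)$ is consistent with $x(0)$ in the sense of Definition~\ref{def:consistent}, and because Assumption~\ref{HipNomCond} makes the oracle return the true cost at every step, the predicted oracle cost coincides with the actual cost, i.e.\ $\hat\ell(j|0)=\ell(\hat x(j|0),\hat u(j))$. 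In particular $\hat\ell(j|0)=\ell_s$ for $j\in\I_N^{N_p-1}$, so constraints \eqref{eq:tc:po:u}--\eqref{eq:tc:po:y} of $P_{\Or}^e(z(0))$ hold, and the extended sequence is feasible for $P_{\Or}^e(z(0))$.

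Second, I would prove the converse. Given any feasible $\hat{\vu}$ for $P_{\Or}^e(z(0))$, the bijection $\Phi_z$ from Theorem~\ref{thm:ssto} associates each predicted oracle state $\hat z(j|0)$ with a unique physical state $\hat x(j|0)=\Phi(\hat z(j|0))$ evolving under \eqref{EqMdl}. The terminal window \eqref{eq:tc:po:u}--\eqref{eq:tc:po:y} forces the oracle trajectory to agree with $z_s$ over $N_p-N$ consecutive steps, and by the argument used in Theorem~\ref{thm:ssto}, this equilibrium of the oracle model corresponds, via $\Phi$, to the physical equilibrium $x_s$; in particular $\hat x(N|0)=x_s$. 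Thus the restriction $(\hat u(0),\ldots,\hat u(N-1))$ is feasible for $P_N^e(x(0))$.

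Third, the objectives match exactly: Assumption~\ref{HipNomCond} gives $\hat\ell(j|0)=\ell(\hat x(j|0),\hat u(j))$ for every $j$, so
\begin{equation*}
V_{\Or,e}(z(0),\hat{\vu})=\sum_{j=0}^{N-1}\hat\ell(j|0)=\sum_{j=0}^{N-1}\ell(\hat x(j|0),\hat u(j))=V_{N,e}(x(0),\hat{\vu}).
\end{equation*}
Combined with the feasibility correspondence and the uniqueness in Assumption~\ref{as:uniqueness}, this shows the two problems share the same optimal first move, i.e.\ $\kappa_{\Or}^e(z(0))=\kappa_{\textup{eco}}^e(x(0))$. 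An induction on $k$ closes the argument: since the same input is injected into \eqref{EqMdl} under both controllers, the real plant state $x(k+1)$ and the oracle state $z(k+1)$ remain consistent, and the argument repeats.

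The main obstacle I anticipate is rigorously identifying the terminal sets: the EMPC on the physical model imposes a point constraint $\hat x(N|k)=x_s$, whereas the oracle EMPC imposes a persistent window $(\hat u,\hat\ell)=(u_s,\ell_s)$ of length $N_p-N$. Showing these are equivalent requires one to exploit the observability window $n_a,n_b\geq 2n$ underpinning Theorem~\ref{ThmOracle} and the bijectivity of $\Phi_z$ in Theorem~\ref{thm:ssto}, so that the oracle terminal constraint injectively pins down $x_s$ rather than merely some cost-indistinguishable point—this is the only place where the Morse assumption on $\ell$ does genuine work for the stability result, and it is the step that would need the most care.
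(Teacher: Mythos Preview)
Your proposal is correct and follows essentially the same route as the paper: establish consistency of predicted costs under Assumption~\ref{HipNomCond}, show the forward feasibility by extending with $u_s$, show the reverse by noting that \eqref{eq:tc:po:u}--\eqref{eq:tc:po:y} force $\hat z(N_p|k)=z_s$ and hence, via the map $\Phi$ from Theorem~\ref{thm:ssto}, $\hat x(N|k)=\Phi(z_s)=x_s$, then invoke Assumption~\ref{as:uniqueness}. The paper phrases the optimality comparison via the pair of inequalities $V_{N,e}^*\geq V_{\Or,e}^*$ and $V_{\Or,e}^*\geq V_{N,e}^*$ rather than your bijection-plus-equal-cost language, but the content is identical; you also make the receding-horizon induction explicit, which the paper leaves implicit.
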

\begin{proof}

First, take into account that in virtue of Theorem \ref{ThmOracle} and Assumption~\ref{HipNomCond}, if~$x(k)$ and~$z(k)$ are consistent, then for a certain sequence of future control inputs $\vu=(u(0),\cdots, u(N-1))$, the predicted trajectories using the plant model~\eqref{EqMdl} and the oracle \eqref{EqPredictor} are such that  
\begin{equation}
\label{eq:consistencia}
\ell(x(j|k),u(j)) = \hat \ell(j|k)=\Or(z(j|k),u(j)).
\end{equation}
    
Assume that $x(k)$ is such that  $P_N^e(x(k))$ is feasible, then there exists a sequence of $N$ inputs $\tilde \vu$ that steers the system to the equilibrium point $(x_s,u_s)$. From Theorem \ref{thm:ssto}, we have that the optimal solution of \eqref{eq:ssto}, $\ell_s$, is equal to $\ell(x_s,u_s)$.  From the consistency between the models \eqref{eq:consistencia}, it is inferred that $\hat \ell(N|k)=\ell_s$. Since $x(N|k)=x_s$, applying subsequently $u_s$, the system remains in the equilibrium point $(x_s,u_s)$ and then $\hat \ell(j|k)=\ell(x_s,u_s)=\ell_s$ for $j\in \mathbb I_N^{N_p}$. Therefore, $P_\Or^e(z(k))$ is feasible.

From this, we can infer that $P_\Or^e(z(0))$ is feasible given that $P_N^e(x(0))$ is assumed to be feasible.


In order to prove the equivalence between the control laws, it is demonstrated that (i) the optimal solution of $P_N^e(x(k))$ is a solution of $P_\Or^e(z(k))$ and (ii) the opposite.
	
(i) It was proven that if $P_N^e(x(k))$ is feasible, then the optimal solution of $P_N^e(x(k))$, {$\vu^*_x$}, is a feasible solution of  $P_\Or^e(z(k))$, and then the costs are such that~$V_{\Or,e}(z(k),\vu^*_x) \geq V_{\Or,e}^*(z(k))$. Taking into account \eqref{eq:consistencia}, we have that $V_{\Or,e}(z(k),\vu^*_x)=V_{N,e}^*(x(k))$, and then $V_{N,e}^*(x(k)) \geq V_{\Or,e}^*(z(k))$.
	
	(ii) On the other hand, in order to prove the opposite, notice that the set of constraints \eqref{eq:tc:po:u} and \eqref{eq:tc:po:y} is equivalent to force that $z(N_p|k)=z_s$.


	Hence, as the optimal solution of $P_\Or^e(z(k))$, $\vu^*_z$ leads to $z(N_p|k)=z_s$, recalling function $\Phi(\cdot)$ defined in the proof of Theorem \ref{thm:ssto}, we have that
	$$x(N|k)=x(N_p-2n|k)=\Phi(z(N_p|k))=\Phi(z_s)=x_s,$$
	and therefore {$\vu^*_z$} is a suboptimal solution of $P_N^e(x(k))$. From \eqref{eq:consistencia}, we have that~$V_{N,e}(x(k),\vu^*_z)=V_{\Or,e}^*(z(k))$, so $V_{\Or,e}^*(z(k))=V_{N,e}(x(k),\vu^*_z)\geq  V^*_{N,e}(x(k))$.
	
	Then, it is straightforward that $V^*_{N,e}(x(k))=V^*_{\Or,e}(z(k))$, so the solution is the same for both problems, i.e. $\vu^*_x=\vu^*_z$, provided that Assumption~\ref{as:uniqueness} holds.
\end{proof}

From this theorem we conclude that the proposed oracle-based economic control law $\kappa_{\Or,e}(z(k))$ provides the same domain of attraction, the same optimality properties and the same evolution that the state-space law $\kappa_{N,e}(x(k))$. This demonstrates that an oracle of the economic cost function suffices to derive economic predictive controllers under nominal conditions.

	%

\subsection{Oracle-based economic MPC without  terminal equality constraint}

A stabilizing economic model predictive control design resorting on terminal equality constraint is simple and direct, but the resulting controller might exhibit a small domain of attraction and lack of inherent robustness \cite{GrimmAUT04}.  In this section, stabilizing oracle-based economic predictive control laws are derived extending  the existing terminal conditions design summarized in Section~\ref{sec:EMPC} to the oracle-based framework. This  design is only based on the input-measured cost available information, and the stability of the resulting  controller is proven.

Taking a stabilizing design based on terminal ingredients (cf. TC1 in Section~\ref{sec:EMPC}), the optimization control problem $P_{\Or}^t(z(k))$ is formulated as follows, where a prediction horizon $N_p$ larger than the control horizon $N_c$ can be considered:
\begin{subequations}\label{eq:POt}\allowdisplaybreaks
	\beqna
	\hspace{-0.6cm}\min_{\hat \vu} && V_{\Or,t}(z(k),\hat \vu) =\nn\\
	&&\sum_{j=0}^{N_p-1} \hat \ell(j|k) + V_{\Or,f}(\hat z(N_p|k))\\
	\hspace{-0.6cm}\textup{s.t.}&& \hat z(0|k)=z(k)\label{eq:PO1} \\
	&& \hat z(j+1|k)= \hat F(\hat z(j|k), \hat u(j)), j \in \I_{0}^{N_p-1}\\
	&& \hat \ell(j|k)=\Or(z(j|k),\hat u(j))\\
	&& u(j) \in \setU\\
    && \hat z(i+1|k)= \hat F(\hat z(i|k), \kappa_f(z(i|k))), \nn\\
    &&i \in \I_{N_c}^{N_p-1}\\
	&& \hat \ell(i|k)=\Or(z(i|k),\kappa_t(z(i|k)))\\
	&& \kappa_f(z(i|k)) \in \setU\\
	&& z(N_p|k)\in \setZ_f. \label{eq:PO8}
	\eeqna
\end{subequations}

The control law is implicitly obtained applying the optimal sequence of control inputs in a receding horizon manner $u(k)=\kappa_{\Or}^t(z(k))=\hat u^*(0|k)$.

In order to derive asymptotic stability of the economic MPC, the following assumption is made~\citep{AmritARC2011}, in which the terminal cost function $V_{\Or,f}$ and the terminal set $\setZ_f$ must satisfy the following conditions:
\begin{assumption} \label{as:TC1}
     $V_{\Or,f}$ is continuous at $z=z_s$ and there exists a terminal control law $u=\kappa_f(z)$ such that for all $z \in \setZ_f$, $\kappa_f(z) \in \setU$, $z^+ = \hat F(z,\kappa_f(z)) \in \setZ_f$  and 
    $$ V_{\Or,f}(z^+) - V_{\Or,f}(z) \leq -\Or(z,\kappa_f(z)) + \ell(x_s,u_s).$$
\end{assumption}

If the linearization of the oracle-based autoregressive model \eqref{EqOracle} at the equilibrium point given by \eqref{eq:ssto} is stabilizable, then the terminal ingredients can be calculated from the linearization, as proposed in \citep{AmritARC2011}, yielding a linear terminal control law $\kappa_f(z)=K (z-z_s)+u_s$, a quadratic terminal cost~$V_{\Or,f}(z)=\frac{1}{2} z^T Q_fz + q_f^T z$, and a terminal constraint $\setZ_f=\{\bar z: \bar (z-z_s)^T P \bar (z-z_s) \leq \alpha \}$.



We are now in position to study the stability of the proposed control law.

\begin{theorem} \label{ThmASTC}
	Consider that Assumptions \ref{HipCont}-\ref{as:TC1} hold. If the initial state $z(0)$ is such that  $P_\Or^t(z(0))$ is feasible, then the equilibrium point given by \eqref{eq:ssto} is asymptotically stable for the system \eqref{EqMdl} controlled by the economic control law $\kappa_\Or^t(z(k))$ and its trajectory satisfies the constraints.
\end{theorem}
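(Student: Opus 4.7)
The plan is to adapt the standard dissipativity-based Lyapunov argument for economic MPC (as in~\cite{AmritARC2011,AngeliTAC12}) to the oracle coordinates, using the bijection between~$z$ and~$(x,\vu_z)$ established in the proof of Theorem~\ref{thm:ssto}. Under Assumption~\ref{HipNomCond}, the oracle is exact, so predictions made with~\eqref{EqPredictor} coincide with those of the plant~\eqref{EqMdl}, and by Theorem~\ref{thm:ssto} the equilibrium~$(z_s,u_s)$ corresponds exactly to~$(x_s,u_s)$ via the continuous map~$x=\Phi(z)$.

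First, I would lift the dissipativity inequality of Assumption~\ref{HipDiss} to the oracle coordinates by defining $\tilde\lambda(z)=\lambda(\Phi(z))$, which is locally Lipschitz continuous since~$\Phi$ is continuous and~$\lambda$ is locally Lipschitz. Using~\eqref{eq:consistencia} to identify~$\Or(z,u)=\ell(\Phi(z),u)$, and the fact that one step of~$\hat F$ corresponds to one step of~$f$ in the~$x$-component of~$\Phi_z$, I would obtain
\begin{equation*}
\tilde\lambda(\hat F(z,u))-\tilde\lambda(z)\leq -\rho(\|\Phi(z)-x_s\|)-\rho(\|u-u_s\|)+\Or(z,u)-\ell_s.
\end{equation*}
Since~$\Phi$ is a continuous bijection on the relevant compact domain, there exists a~$\mathcal K$-function~$\tilde\rho$ such that~$\rho(\|\Phi(z)-x_s\|)\geq \tilde\rho(\|z-z_s\|)$ on a neighbourhood of~$z_s$, so the rotated stage cost
\begin{equation*}
L(z,u)\;\equal\;\Or(z,u)-\ell_s+\tilde\lambda(z)-\tilde\lambda(\hat F(z,u))
\end{equation*}
is bounded below by~$\tilde\rho(\|z-z_s\|)+\rho(\|u-u_s\|)$.

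Next, I would introduce the rotated value function~$\tilde V_{\Or,t}(z)=V^*_{\Or,t}(z)-N_p\,\ell_s+\tilde\lambda(z)$ and the rotated terminal cost~$\tilde V_{\Or,f}(z)=V_{\Or,f}(z)+\tilde\lambda(z)$. Assumption~\ref{as:TC1} combined with the dissipativity lift gives the key terminal decrease
\begin{equation*}
\tilde V_{\Or,f}(\hat F(z,\kappa_f(z)))-\tilde V_{\Or,f}(z)\leq -L(z,\kappa_f(z)),
\end{equation*}
so the telescoping sum is non-negative and the rotated stage cost is a valid running cost. Recursive feasibility would then follow by the standard shift-and-append argument: given the optimal sequence at~$k$, append~$\kappa_f$ at step~$N_p$ to obtain a feasible sequence at~$k+1$, which uses forward invariance of~$\setZ_f$ from Assumption~\ref{as:TC1}. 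Comparing costs of the shifted sequence and the optimal one at~$k+1$ yields the Lyapunov decrease
\begin{equation*}
\tilde V_{\Or,t}(z(k+1))-\tilde V_{\Or,t}(z(k))\leq -L(z(k),\kappa_\Or^t(z(k))).
\end{equation*}

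Finally, continuity of~$V^*_{\Or,t}$ at~$z_s$ (Assumption~\ref{as:uniqueness}, transferred through~$\Phi$) and compactness of~$\setU$ provide the upper~$\mathcal K_\infty$ bound on~$\tilde V_{\Or,t}$ near~$z_s$, while the lower bound comes from the positive-definiteness of~$L$ established above, so~$\tilde V_{\Or,t}$ is a genuine Lyapunov function certifying asymptotic stability of~$z_s$ in the oracle state, and hence of~$x_s$ in the plant state via~$\Phi$. The main obstacle I anticipate is rigorously justifying the positive-definite lower bound on~$L(z,u)$ in the full~$z$-coordinate rather than only on the~$x$-component: because~$z$ contains past inputs in addition to~$x(k-2n)$, closeness of~$\Phi(z)$ to~$x_s$ does not immediately imply closeness of~$z$ to~$z_s$, and one must argue that along closed-loop trajectories the input component of~$z$ is driven to~$u_s$ by the~$\rho(\|u-u_s\|)$ penalty, so that~$\|z-z_s\|$ can be bounded by a~$\mathcal K$-function of~$\|\Phi(z)-x_s\|$ plus the accumulated input deviations over the last~$n_b$ steps.
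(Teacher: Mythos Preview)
Your overall strategy (lift dissipativity to the oracle coordinates, form rotated costs, use the rotated value function as a Lyapunov candidate) matches the paper, but there are two concrete gaps, and the paper closes the second one in a way you do not anticipate.

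\textbf{Time mismatch in the lifted storage function.} The map~$\Phi$ from the proof of Theorem~\ref{thm:ssto} returns $x(k-2n)$, not $x(k)$. Hence $\tilde\lambda(z(k))=\lambda(x(k-2n))$ and one step of~$\hat F$ advances this to $\lambda(x(k-2n+1))$, so the dissipativity inequality you obtain involves $\ell(x(k-2n),u(k-2n))$ and $\rho(\|u(k-2n)-u_s\|)$, not $\Or(z(k),u(k))=\ell(x(k),u(k))$ and $\rho(\|u(k)-u_s\|)$. The stage cost of $P_\Or^t$ is the latter, so the rotated cost you write down is not the one that telescopes in the value function. The paper repairs this by first composing $\Phi_z$ with the $2n$-step forward flow~$\mathcal F$ (possible because $f$ is state-invertible) to obtain a continuous bijection $\mathcal G$ with $x(k)=G_x(z(k))$, and then sets $\lambda_z(z)=\lambda(G_x(z))$. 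With this choice the lifted dissipativity reads exactly $\lambda_z(z^+)-\lambda_z(z)\le -\rho(\|x(k)-x_s\|)-\rho(\|u(k)-u_s\|)+\Or(z,u)-\ell_s$, which is what you need.

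\textbf{Lower bound in the full $z$-coordinate.} Your claim that ``since $\Phi$ is a continuous bijection \ldots\ $\rho(\|\Phi(z)-x_s\|)\ge\tilde\rho(\|z-z_s\|)$'' is false: $\Phi$ (or $G_x$) forgets the past-input components of~$z$, so closeness of $x(k)$ to $x_s$ says nothing about $\|z(k)-z_s\|$. You correctly flag this as the main obstacle, but your proposed fix (absorb the past-input deviations ``along trajectories'') is not a Lyapunov argument as stated. The paper's resolution is to \emph{change the Lyapunov candidate}: instead of $V^*_{\Or,tr}(z(k))$, it uses
\[
W(z(k))=\sum_{j=k-2n}^{k} V^*_{\Or,tr}(z(j)).
\]
Each summand contributes $\rho(\|x(j)-x_s\|)+\rho(\|u(j)-u_s\|)$ to the lower bound, and the tuple $(x(k-2n),\ldots,x(k),u(k-2n),\ldots,u(k))$ is a continuous bijective image of $z(k)$ (via $\mathcal H_{2n}=\mathcal F_{2n}\circ\Phi_z$), so the sum is bounded below by a $\mathcal K$-function $\beta_1(\|z(k)-z_s\|)$. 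The decrease $W(z(k+1))-W(z(k))\le -\beta_1(\|z(k)-z_s\|)$ then follows by summing the one-step decrease of $V^*_{\Or,tr}$ over the window. This summed candidate is precisely the device that converts the $(x,u)$-positive-definiteness you have into the $z$-positive-definiteness you need.
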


\begin{proof}

First, we show that there exists a map that relates the real state of the system $x(k)$ and the state of the oracle-based model $z(k)$. From Assumption \ref{HipCont} we have that the  model function $x \mapsto f(x,u)$ defines a diffeomorphism in $x$ for a given $u$. Therefore, there exists a continuous and bijective map $\mathcal F$ such that 
 \begin{equation} \label{eq:mapF}
      (x(k),\vu_z(k))=\mathcal F(x(k-2n),\vu_z(k)).
 \end{equation}

Taking into account that $z(k)$ contains the sequence $\vu_z(k)$ and the maps~\eqref{eq:Phiz} and~\eqref{eq:mapF}, we can state that  $\mathcal G=\mathcal F \circ \Phi_z$ is a continuous and bijective map such that
\begin{equation} \label{eq:mapG}
      (x(k),\vu_z(k))=\mathcal G(z(k)).
 \end{equation}
 
Based on this map,  it is shown that the system \eqref{EqOracle} is dissipative with respect to the output of the oracle.  Consider the continuous function $x(k)=G_x(z(k))$ given  by the first $n$ components of map $\mathcal G (\cdot)$ introduced in \eqref{eq:mapG}, and  define the storage function $\lambda_z(z)=\lambda(G_x(z))$.
 
 Then, from Assumptions \ref{HipDiss} and \ref{HipNomCond} and Theorem~\ref{thm:ssto}, we have that
 \begin{align*}
\lambda_z(z(k+1))&- \lambda_z(z(k))= \lambda(x(k+1))- \lambda(x(k))\\
 \leq & \,\ell(x(k),u(k))-\ell(x_s,u_s) \\
  &+ \rho(\|x(k)-x_s\|) +\rho(\|u(k)-u_s\|)\\
 = \,&\Or(z(k),u(k)) - \ell_s\\
  &+  \rho(\|x(k)-x_s\|) + \rho(\|u(k)-u_s\|).
 \end{align*}

Stability of the proposed controller can be stated using the rotated stage cost function and terminal cost function, as proposed in \citep{AmritARC2011}:
\beqnan
\hspace{-0.6cm}L_r(z,u)&=& \Or(z,u) + \lambda_z(z)- \lambda_z(\hat F(z,u)) -\ell_s,\\
\hspace{-0.6cm}V_{\Or,fr}(z)&=&V_{\Or,f}(z) + \lambda_z(z)  - V_{\Or,f}(z_s) - \lambda_z(z_s).
\eeqnan
Notice that the rotated stage cost~$L_r$ satisfies the following condition:
$$ L_r(z,u) \geq \rho(\|x-x_s\|) + \rho(\|u-u_s\|), $$
where $x=G_x(z)$.

Now, we define the  optimization problem $P_{\Or,tr}(z(k))$ based on the rotated ingredients as:
\begin{subequations}\label{eq:POtr}
\begin{eqnarray}
\min_{\hat \vu}&&  V_{\Or,tr}(z(k),\hat \vu) = \nn\\
&&\;\sum_{j=0}^{N_p-1} L_r(j|k) + V_{\Or,fr}(\hat z(N_p|k))\\
\mathrm{s.t.}&& \eqref{eq:PO1}-\eqref{eq:PO8}\\
&&L_r(j|k) = \hat \ell(j|k) + \lambda_z(\hat z(j|k))\nn\\
&&\;- \lambda_z(\hat z(j+1|k)) - \ell_s.
\end{eqnarray}
\end{subequations}

Following \citep{AmritARC2011}, it can be proven that
\beqnan
\hspace{-0.7cm}V_{\Or,tr}(z(k),\hat \vu) &=&  \sum_{j=0}^{N_p-1} L_r(j|k) + V_{\Or,fr}(\hat z(N_p|k)) \\
&=&  V_{\Or,t}(z(k),\hat \vu) + \lambda_z(z(k)) -N_p\ell_s \nn\\
&&\; - V_{\Or,f}(z_s) - \lambda_z(z_s),
\eeqnan
and then the optimal solution of~$P_{\Or,tr}(z(k))$ is equal to the optimal solution of~$P_{\Or,t}(z(k))$. Therefore, the receding horizon control laws derived from these two optimization problems are identical.

Next, asymptotic stability of the proposed economic MPC is addressed. First,it can be shown that the optimal cost function 
$V^*_{\Or,tr}(z)$ is such that
\begin{align} \label{eq:Tm4cotarho}
    V^*_{\Or,tr}(z(k)) &\geq L_r(z(k),\hat u^*(0|k)) \nn\\
    &\geq \rho(\|x(k)-x_s\|) + \rho(\|u(k)-u_s\|),
\end{align}
for all feasible $z(k)$. 

In order to prove asymptotic stability, we introduce the following Lyapunov function candidate:
\begin{equation}
    W(z(k))=\sum_{j=k-2n}^{k} V^*_{\Or,tr}(z(j)),
\end{equation}
and next we will prove that it satisfies the sufficient conditions to derive asymptotic stability of the controlled system.

From \eqref{eq:Tm4cotarho}, we have that
\begin{align} 
    W(z(k))&=\sum_{j=k-2n}^{k} V^*_{\Or,tr}(z(j))\nn\\
    &\geq \sum_{j=k-2n}^{k} \rho(\|x(j)-x_s\|) + \rho(\|u(j)-u_s\|). \label{eq:Tm4cotarho2}
\end{align}
Denoting the sequences $\vx_{2n}(k-2n)=(x(k),\cdots, x(k-2n))$,
and $\vu_{2n}(k-2n)=(u(k),\cdots, u(k-2n))$ and $\vx_{2n}^s$ and $\vu_{2n}^s$ their steady values counterparts at $(x_s,u_s)$, we can see that the right hand side of \eqref{eq:Tm4cotarho2} is a $\mathcal K$-function  of the sequence of $\vx_{2n}(k-2n)- \vx_{2n}^s$ and $\vu_{2n}(k-2n) - \vu_{2n}^s$. From similar arguments to the derivation of map $\mathcal F$ previously presented in \eqref{eq:mapF}, there exists a bijective map $\mathcal F_{2n}$ such that
\begin{align}
    (\vx_{2n}(k-2n),&\vu_{2n}(k-2n))=\nn\\
    &\mathcal F_{2n} (x(k-2n),\vu_{2n}(k-2n)).
\end{align}
Taking into account \eqref{eq:Phiz}, and denoting $\mathcal H_{2n}= \mathcal F_{2n} \circ \Phi_z$, we have that $\mathcal H_{2n}$ is a bijective map such that
\begin{equation}
    (\vx_{2n}(k-2n),\vu_{2n}(k-2n))=\mathcal H_{2n}(z(k)),
\end{equation}
and then $(\vx_{2n}(k-2n),\vu_{2n}(k-2n)) = (\vx_{2n}^s,\vu_{2n}^s)$ if and only if $z(k)=z_s$. Therefore, there exists a $\mathcal K$-function $\beta_1$ such that

\begin{align} \label{eq:Tm4cotarho3}
    W(z(k))\geq& \sum_{j=k-2n}^{k} \rho(\|x(j)-x_s\|) + \rho(\|u(j)-u_s\|) \nn\\
    \geq& \beta_1(\|z(k)-z_s\|).
\end{align}

On the other hand, we have that $V^*_{\Or,tr}(z) \leq V_{\Or,fr}(z)$ for all $z-z_s  \in \setZ_f$ (see Proposition~2.18 in~\citep{RawlingsLIB17}), which means that the optimal cost function $V^*_{\Or,tr}(z)$ is continuous at $z_s$ and locally bounded for all $z$ such that  $z-z_s  \in \setZ_f$. Consequently $W(z)$ is also continuos at $z_s$ and locally bounded and in virtue of Proposition B.25 in \citep{RawlingsLIB17Preface}, there exists a $\mathcal K$-function $\beta_2(\cdot)$ such that for all $z-z_s  \in \setZ_f$
$$ W(z) \leq \beta_2(\|z-z_s\|).$$

Finally, following \cite{AmritARC2011} we have that
\begin{align}
    V^*_{\Or,tr}(z(k&+1)) -  V^*_{\Or,tr}(z(k)) \leq - L_r(z(k),\hat u^*(0|k))\nn \\
    &\leq - \rho(\|x(k)-x_s\|) -\rho(\|u(k)-x_s\|).
    \label{eq:DecVNO}
\end{align}

Then, we can state that
\begin{align}
    W(z&(k+1)) -  W(z(k)) \leq\nn\\
    &\sum_{j=k-2n}^{k} - \rho(\|x(j)-x_s\|) -\rho(\|u(j)-x_s\|).
    \label{eq:DecW}
\end{align}
From \eqref{eq:Tm4cotarho3}, we get the last sufficient condition in order to derive asymptotic stability
\begin{equation}
    W(z(k+1)) -  W(z(k)) \leq -\beta_1(\|z(k)-z_s\|).
    \label{eq:DecW}
\end{equation}

Consequently, the system controlled by the control law derived from \eqref{eq:PO} is stable, and it asymptotically steers the state $z(k)$ to the equilibrium point~$z_s$. From~\eqref{eq:Phiz}, this implies that the state of system, $x(k)$, is equivalently steered to the economically optimal steady state  $x_s$. 
\end{proof}

%

\begin{remark}\textbf{(Economic optimality)}
    From equation~\eqref{eq:DecVNO} we can derive that 
    \begin{equation}
    V^*_{\Or,t}(z(k+1)) -  V^*_{\Or,t}(z(k)) \leq - \ell(z(k),\kappa_{\Or}^t(z(k))) - \ell_s.
\end{equation}
and therefore, equation~\eqref{EcoPropt} also holds for the proposed controller.
\end{remark}

Based on the previous stability proof of the oracle-based economic MPC, the stabilizing designs without terminal constraint proposed by \cite{FaulwasserAUT17} and \cite{GruneLIB17} (cf. TC2 and TC3 in Section \ref{sec:EMPC}) can also be extended to the oracle-based case, provided that some additional technical conditions are fulfilled. In this case, the control law is derived from the following optimization problem.

\begin{subequations}\label{eq:POtc}
	\beqna
	\hspace{-0.7cm}\min_{\hat \vu} && V_{\Or,tc}(z(k),\hat \vu) =  \sum_{j=0}^{N-1} \hat \ell(j|k) + \eta^T \hat z(N|k)\nn\\
	&&\\
	\hspace{-0.7cm}\textup{s.t.}&& \hat z(0|k)=z(k) \label{eq:PO31}  \\
	&& \hat z(j+1|k)= \hat F(\hat z(j|k), \hat u(j)), j \in \I_{0}^{N-1}\\
	&& \hat \ell(j|k)=\Or(z(j|k),\hat u(j))\\
	&& u(j) \in \setU.
	\eeqna
\end{subequations}

Thus, taking a sufficiently large prediction horizon $N$ and an appropriate Lagrange multiplier~$\eta$, the resulting control law asymptotically stabilizes the system \cite{FaulwasserAUT17}. Taking $\eta=0$, then only practical stability can be achieved \cite{GruneLIB17}.

\section{Practical viability of the oracle-based economic MPC} \label{sec:practical}
{In the previous section, asymptotic stability of the oracle-based economic MPC was proven under the nominal conditions stated in Assumption~\ref{HipNomCond}. The exact prediction considered in this hypothesis may hold if every applied sequence of inputs rendered the system observable, and if the available oracle NARX model were exact.}

{In this section we study the case in which the available oracle is inexact,  providing approximate predictions with a bounded estimation error. We consider that the two previous assumptions may not hold, which is likely in practice, either due to non-observable input sequences or because of oracles with estimation errors.

In practice, the oracle is determined from input-output historical data, generated with an appropriate sequence of inputs, for which a bound on the estimation error signal
\begin{equation} \label{eq:EstErr}
    |w(k)|=|\ell(x(k),u(k))- \Or(z(k),u(k))|
\end{equation}
may be estimated. There exists machine learning methods that provide either guaranteed bounds, such as kinky inference \citep{calliess2014_thesis,CanaleIJRNC14}, or probabilistic bounds, such as Gaussian processes~\citep{rasmussen2006gaussian}. We assume that we are in the (worst-case) scenario that the obtained bound is only  valid for every sequence of inputs that makes the system observable.  

{In the following proposition, it is proven that an inexact oracle obtained from input-output data, possibly derived using observable sequences of inputs, with a bounded estimation error, also provides estimations with a bounded error for any sequence of inputs, even if they are not observable.}



\begin{proposition} \label{PropErrorOracle}
{Consider that Assumptions \ref{HipCont} and \ref{as:Morse} hold} and that the oracle function $\Or(\cdot,\cdot)$ is a {continuous} function such that the estimation error~\eqref{eq:EstErr} is bounded by~$\tilde \mu$ for every sequence of inputs for which the system is observable. Then, for every admissible sequence of inputs, the estimation error is bounded by~ $\mu>\tilde \mu$. 
\end{proposition}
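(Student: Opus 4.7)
The plan is to exploit continuity together with the fact that the observable input sequences form a dense subset of the compact set $\mathcal{U}^{n_b+1}$. From Theorem~\ref{ThmOracle}, the set $\mathcal{U}_{no}$ has measure smaller than any $\epsilon>0$, so taking the intersection over a sequence $\epsilon_i\to 0$ yields that the non-observable sequences form a Lebesgue-null set inside $\mathcal{U}^{n_b+1}\subset\R^{m(n_b+1)}$. Since any non-empty open subset of $\R^{m(n_b+1)}$ relative to $\mathcal{U}^{n_b+1}$ has positive measure, the complement $\mathcal{U}_o$ is dense in $\mathcal{U}^{n_b+1}$.

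Next, I would express the estimation error as a continuous function of the input sequence (together with a suitable reference past state). By Assumption~\ref{HipCont}, the plant model $f$ is smooth and state invertible, so the state trajectory is a smooth function of its initial condition and the applied input sequence; composing with the smooth stage cost $\ell$ shows that $\ell(x(k),u(k))$ depends continuously on the sequence of past inputs and the reference state. Likewise, the past costs forming $z(k)$ depend continuously on these variables, and since the oracle function $\mathcal{O}$ is continuous (Theorem~\ref{ThmOracle}), the prediction $\mathcal{O}(z(k),u(k))$ is also continuous in the input sequence. Hence the map
\begin{equation*}
    \vu_z \mapsto |\ell(x(k),u(k))-\mathcal{O}(z(k),u(k))|
\end{equation*}
is continuous on the compact set $\mathcal{U}^{n_b+1}$.

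With continuity in hand, the conclusion follows by a standard density argument: the continuous error function is bounded by $\tilde\mu$ on the dense subset $\mathcal{U}_o$, hence by continuity it is bounded by $\tilde\mu$ on the closure $\overline{\mathcal{U}_o}=\mathcal{U}^{n_b+1}$. Consequently any $\mu>\tilde\mu$ provides a strict uniform bound on the estimation error over all admissible input sequences, observable or not.

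The main obstacle in making this rigorous is verifying continuity of the error map uniformly over the whole state space of the oracle: the state $z(k)$ aggregates measured past costs whose generation may depend on an infinite history of previous inputs. I would handle this by fixing a reference state sufficiently far in the past (say at time $k-\max(n_a,n_b)$) and treating the error as a function of that state together with the trailing input sequence, invoking smoothness of $f$ and $\ell$ to get continuity on each compact slice, and then using compactness of $\mathcal{U}^{n_b+1}$ (and boundedness of $\mathcal{L}$ from Assumption~\ref{HipCont}) to make the bound uniform.
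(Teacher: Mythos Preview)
Your argument is correct and rests on the same two pillars as the paper's proof: density of the observable input sequences in the compact set $\mathcal U^{n_b+1}$, and continuity of the estimation error as a function of the input sequence (via smoothness of $f$ and $\ell$ and continuity of $\mathcal O$). The difference is purely in packaging. You invoke the abstract fact that a continuous function bounded by $\tilde\mu$ on a dense subset is bounded by $\tilde\mu$ on the closure, which immediately yields any $\mu>\tilde\mu$ as a strict bound. The paper instead writes out an explicit $\epsilon$-perturbation argument: given a non-observable sequence $\vu$, it picks a nearby observable $\tilde\vu$ with $\|\vu-\tilde\vu\|\le\theta_u(\epsilon)$, splits the error via the triangle inequality into three pieces
\[
|\ell(x,u)-\ell(\tilde x,u)| + |\ell(\tilde x,u)-\mathcal O(\tilde z,u)| + |\mathcal O(\tilde z,u)-\mathcal O(z,u)|,
\]
bounds the middle term by $\tilde\mu$ and the outer two by $\mathcal K$-functions of $\epsilon$, and concludes $|w(k)|\le\tilde\mu+\Theta(\epsilon)$ for arbitrary $\epsilon$. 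Your route is cleaner and in fact yields the slightly sharper conclusion $|w(k)|\le\tilde\mu$; the paper's route has the advantage of making the moduli of continuity explicit, which matters if one later wants quantitative ISS gains. Your handling of the uniformity-in-initial-state issue (fixing a reference state at $k-\max(n_a,n_b)$ and using compactness) is also the right way to close the argument, and is at least as careful as what the paper does.
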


\begin{proof}
From Theorem \ref{ap:ThmNARX}, derived in the appendix from~\citep[Theorem 6]{LevinCAP97}, we have that  for any (arbitrarily small) constant $\epsilon>0$, the set of sequence of inputs for which the system is not observable, $\setU_{no}$, is of measure $\epsilon$, i.e. arbitrarily small.
Let assume that the state of the system is $x(k-2n)$ and a sequence of $2n$ inputs $\vu=(u(k-1),\cdots, u(k-2n)) \in \setU_{no}$ is applied, evolving the system with a sequence of outputs $\mathbf l=(\ell(k-1),\cdots, \ell(k-2n))$. 

Since $\vu \in \setU_{no}$, there exists a sequence $\tilde \vu \not \in \setU_{no}$ such that $\|\tilde \vu - \vu\|\leq \theta_u(\epsilon)$, being $\theta_u$ a $\mathcal K$-function. Let $\tilde l$, $\tilde x(k)$ be the corresponding sequence of outputs and the state derived from the application of $\tilde  \vu$ at $x(k-2n)$. From the smoothness of the transition map $f$ and of the output map $\ell$, we can derive that there exists a couple of $\mathcal K$-functions $\theta_y$ and $\theta_x$ such that $\| \tilde x(k) - x(k) \| \leq \theta_x(\epsilon)$ and
$\|\mathbf{\tilde l} -  \mathbf l \| \leq \theta_y(\epsilon)$. Let define $z(k)$ as the regressor state that is composed of $\mathbf l$ and $\vu$, and $\tilde z(k)$ as the one composed of $\mathbf{\tilde l}$ and $\tilde \vu$. Then, the estimation error can be bounded as follows:
\beqnan
 |w(k)| &=& |\ell(x(k),u(k))-\Or(z(k),u(k))|\\
 &\leq& |\ell(x(k),u(k)) -  \ell(\tilde x(k),u(k))| \\
 && + | \ell(\tilde x(k),u(k)) - \Or(\tilde z(k),u(k))|\\
 && + |\Or(\tilde z(k),u(k)) - \Or(z(k),u(k))|.
\eeqnan

From the smoothness of $\ell(x,u)$, we have that this function is locally bounded. Then, from \cite[Proposition B.25]{RawlingsLIB17Preface}, we have that for all $x(k) \in \{x: \|x-\tilde x(k)\| \leq \theta_x(\epsilon)\}$ there exists a $\mathcal K$-function $\theta_\ell(\cdot)$ such that
\begin{align*}
|\ell(x(k),u(k)) -  \ell(\tilde x(k),u(k))| &\leq \theta_\ell(|x(k)-\tilde x(k)|) \\
&\leq \theta_\ell(\theta_x(\epsilon)).
\end{align*}
On the other hand, since $\tilde \vu$ is a sequence of inputs that make the system observable, we have that  
$ | \ell(\tilde x(k),u(k)) - \Or(\tilde z(k),u(k))| \leq \tilde \mu$.

By construction, \mbox{$\|z(k)-\tilde z(k)\|\leq \theta_z(\epsilon)$}, where~$\theta_z$ is a~$\mathcal K$-function.{In addition, since the oracle function $\Or(z,u)$ is assumed to be continuous in its arguments and the domains of $z$ and $u$ are compact, the oracle function is uniformly continuous in its domain.  Therefore, we can state that there exists a $\mathcal K$-function $\theta_\Or$ such that
\begin{align*}
    \Or(\tilde z(k),u(k)) - \Or(z(k),u(k))| &\leq \theta_\Or(z(k)-\tilde z(k)) \\
    &\leq \theta_\Or(\theta_z(\epsilon)). 
\end{align*}}

Recapping, we have demonstrated that
$$ |w(k)| = |\ell(x(k),u(k))-\Or(z(k),u(k))| \leq \tilde \mu + \Theta(\epsilon),$$
for a $\mathcal K$-function $\Theta(\cdot)$. Since $\epsilon$ can be taken arbitrarily small, for any $\mu >\tilde \mu$, there exists a suitable $\epsilon>0$ for which the bound on the estimation error holds.
\end{proof}

{
\begin{remark} Notice that the hypothesis that the oracle function is continuous is a mild assumption, since the form of the oracle function is chosen by the user, so it can be designed to satisfy this condition. For instance, oracles derived from Lipschitz interpolation methods, neural networks or  Gaussian processes can be chosen to guarantee the satisfaction of this assumption.
\end{remark} 

}


Next, we will prove the practical viability of the proposed controllers, demonstrating that the closed-loop system is input-to-state stable (ISS) w.r.t.~the model mismatch signal $w(k)$.

It is well known that a nominally stabilizing MPC may exhibit zero-robust\-ness due to the the discontinuity of the optimal solution~\cite{GrimmAUT04}. In \cite[Theorem~4]{LimonNMPC08}, it is proven that if the ingredients of the MPC are uniformly continuous and the constraints on the state in the optimization problem are not active, then the resulting control law is ISS in a region where the constraints on the states are not active.

Although the system to be controlled does not have constraints on the states, the optimization problems of the  proposed oracle-based economic predictive controllers ($P_{\Or}^e$ and $P_{\Or}^t$) have constraints on the states, arising from the terminal constraint. .

In the case of $P_{\Or}^e$, the terminal equality constraint is instrumental for the stability and hence, it cannot be removed without compromising the stability (although, as proven in \citep{GruneLIB17}, practical stability can be achieved for a sufficiently large $N$).

Stability of predictive controllers with terminal cost but without terminal constraint has been analyzed in~\citep{LimonTAC06}. In that paper, it was proven that there exists a level set of the optimal cost function where the MPC without terminal constraint is stabilizing. This level set can be enlarged if the prediction horizon is increased, or if the terminal cost function is weighted. This result was extended in \cite{limon2018nonlinear} to the tracking case, and a prediction horizon larger than a control horizon. As analysed in~\cite{manzano2019output}, larger prediction horizons increase the domain of attraction while barely increasing computational complexity, in contrast to choosing larger control horizons.

Taking these arguments into account, we can derive robust stability of the proposed oracle-based economic MPC. To this aim, we first need to proof the following lemma, which is an extension of~\citep{LimonTAC06,limon2018nonlinear}, tailored to this case.

\begin{lemma} \label{lem:WTC}
{Consider that Assumptions \ref{HipCont}-\ref{as:Morse} and \ref{as:TC1} hold} and let $V^*_{\Or,tr}(\cdot)$ be the optimal cost of the optimization problem $P_{\Or,tr}(\cdot)$ given by~\eqref{eq:POtr}. Then, there exists certain positive constants $d$ and $ \alpha$ such that for any $z$ satisfying 
$$V^*_{\Or,tr}(z) \leq N_p d + \alpha,$$
the terminal constraint \eqref{eq:PO8} in $P_{\Or,tr}(z)$ is not active.
\end{lemma}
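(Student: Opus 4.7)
The plan is to extend the level-set argument of \cite{LimonTAC06}, also employed in \cite{limon2018nonlinear}, to the rotated formulation of the oracle-based economic MPC. The strategy is to identify a sub-level set of $V^*_{\Or,tr}$ that is contained in a neighborhood of $z_s$ on which the terminal region $\setZ_f$ is reached with slack, so that the terminal constraint plays no role at the optimum.

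First, I would recall from the proof of Theorem~\ref{ThmASTC} that the rotated stage cost $L_r(z,u)$ and the rotated terminal cost $V_{\Or,fr}(z)$ are non-negative and vanish at the equilibrium, with $V_{\Or,fr}$ continuous at $z_s$. Since $\setZ_f$ is built from the linearization around the equilibrium under Assumption~\ref{as:TC1}, $z_s$ lies in the interior of $\setZ_f$, so there exists $r>0$ with $\mathcal{B}_r(z_s)\subseteq\mathrm{int}(\setZ_f)$. By continuity of $V_{\Or,fr}$ at $z_s$, I would then pick $\alpha>0$ small enough that $\{z:V_{\Or,fr}(z)\le\alpha\}\subseteq\mathcal{B}_r(z_s)\subseteq\mathrm{int}(\setZ_f)$.

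Next, I would invoke the windowed lower bound obtained within the proof of Theorem~\ref{ThmASTC}, namely that there exists a $\mathcal K$-function $\beta_1$ such that the sum of $L_r$ over $2n+1$ consecutive stages is bounded below by $\beta_1(\|z-z_s\|)$, where $z$ is the final state of the window. Applied to the predictive trajectory, this implies that any such window whose final state lies outside $\mathcal{B}_r(z_s)$ contributes at least $\beta_1(r)$ to the cumulative predicted cost. Setting $d:=\beta_1(r)/(2(2n+1))$, the assumption $V^*_{\Or,tr}(z)\le N_p d+\alpha$ bounds the number of such ``bad'' windows, forcing the optimal predictive trajectory to enter $\mathcal{B}_r(z_s)$ at some stage $j^\star\le N_p-2n$.

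Finally, I would close the argument by a sub-optimality construction: replacing the tail of the optimal input by the terminal control law $\kappa_f$ from index $j^\star$ onwards yields an admissible input whose trajectory remains in $\setZ_f$ by the invariance in Assumption~\ref{as:TC1}, and whose terminal state satisfies $V_{\Or,fr}(\hat z(N_p|k))\le\alpha$ by telescoping the terminal-cost decrease inequality of Assumption~\ref{as:TC1} along the rotated cost. Optimality of the original solution then forces $V_{\Or,fr}(\hat z^*(N_p|k))\le\alpha$ as well, so that $\hat z^*(N_p|k)\in\mathrm{int}(\setZ_f)$ and the terminal constraint \eqref{eq:PO8} is not active. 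The main obstacle will be the quantitative bookkeeping of $d$ and $\alpha$: since the bijection $\mathcal G$ of Theorem~\ref{ThmASTC} relating $z$ to $(x,\vu_z)$ is only continuous, the windowed lower bound in $\|z-z_s\|$ must be reconciled with the cost accumulation estimated through $\|x-x_s\|$ via the $\mathcal K$-function $\rho$ of Assumption~\ref{HipDiss}, so that the constants $d$ and $\alpha$ are both strictly positive and independent of the initial state $z$.
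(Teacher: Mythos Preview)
Your overall structure mirrors the paper's, but two concrete issues prevent the argument from closing, and the detour through the windowed bound $\beta_1$ is unnecessary compared to the paper's direct route.

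\textbf{Choice of $d$.} The paper does not invoke the windowed estimate from Theorem~\ref{ThmASTC}. It works directly with the sublevel set $\Omega_f=\{z:V_{\Or,fr}(z)\le\alpha\}\subset\mathrm{int}(\setZ_f)$ and takes $d>0$ as a uniform \emph{per-stage} lower bound on $L_r(z,u)$ for $z\notin\Omega_f$, $u\in\setU$. This gives immediately $V^*_{\Or,tr}(z)>N_p d+\alpha$ once one knows that the entire optimal predicted trajectory stays outside $\Omega_f$. Your windowed construction via $\beta_1$ is more delicate (the map $\mathcal H_{2n}$ was established for the closed-loop trajectory, and the index alignment between the window of $L_r$ values and the regressor $\hat z(j|k)$ must be checked carefully) and buys nothing extra.

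\textbf{The final optimality step.} Here there is a genuine gap. You show the optimal trajectory enters $\mathcal B_r(z_s)$ at some $j^\star$, build a candidate using $\kappa_f$ from $j^\star$, and then claim ``optimality forces $V_{\Or,fr}(\hat z^*(N_p|k))\le\alpha$''. But with your definitions, $\{V_{\Or,fr}\le\alpha\}\subseteq\mathcal B_r(z_s)$, not the reverse, so $\hat z^*(j^\star|k)\in\mathcal B_r(z_s)$ does not give $V_{\Or,fr}(\hat z^*(j^\star|k))\le\alpha$. The comparison that actually works is the one in the paper, run as a contradiction: assume $\hat z^*(N_p|k)\notin\Omega_f$ and that some $\hat z^*(i|k)\in\Omega_f$; the candidate cost is at most $\sum_{j<i}L_r+V_{\Or,fr}(\hat z^*(i|k))$, while the optimal cost exceeds $\sum_{j<i}L_r+\alpha$ (because $V_{\Or,fr}(\hat z^*(N_p|k))>\alpha$ and the remaining $L_r$ terms are nonnegative). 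Optimality then yields $V_{\Or,fr}(\hat z^*(i|k))>\alpha$, a contradiction. Hence if the terminal state is outside $\Omega_f$, the \emph{whole} trajectory is, and the per-stage bound $d$ delivers $V^*_{\Or,tr}(z)>N_p d+\alpha$. Your sketch has the right candidate but applies the optimality comparison in the wrong direction and with the wrong set.
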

\begin{proof}
Let define the compact region $ \Omega_f =\{ z: V_{\Or,fr}(z) \leq \alpha\}$ and let $\alpha$ be a positive constant such that $\Omega_f$ is contained in the interior of $\setZ_f$. Notice that this constant exists since the constraints are not active at the equilibrium point,~$z=z_s$.

Then, for all $z \in \Omega_f$ we have that 
\begin{equation}
    V_{\Or,fr}(\hat F(z,\kappa_f(z))) - V_{\Or,fr}(z) \leq - L_r(z,\kappa_f(z)).
\end{equation}
From this inequality we have that for any $M\geq 1$
\begin{equation} \label{eq:IneqCT}
   \sum_{j=0}^{M-1} L_r(z(j),\kappa_f(z(j))) +  V_{\Or,fr}(z(M)) \leq V_{\Or,fr}(z),
\end{equation}
where $z(j)$ is the trajectory of the system beginning at $z$, and controlled by the terminal control law $u=\kappa_f(z)$. Besides, from \citep[Proposition 2.18]{RawlingsLIB17} we can infer that $V^*_{\Or,tr}(z) \leq V_{\Or,fr} (z)$, for all~$z \in \Omega_f$.

Define the positive constant $d>0$ such that $L_r(z,u) > d$ for all $z \not \in \Omega_f$, and for all $u \in \setU$. This constant exists since $L_r(\cdot,\cdot)$ is a positive definite function of $(z-z_s)$, and $\Omega_f$ is a neighborhood of $z_s$.

Let $\vu^*$ and $\vz^*$ be the optimal sequences of inputs and states of $P_{\Or,tr}(z)$. We will prove that if the terminal state $z^*(N_p)$ is not contained in $\Omega_f$, then the whole predicted trajectory lies out of $\Omega_f$.

This will be proved by contradiction. Assume that $z^*(N_p) \not  \in \Omega_f$, but that there exists certain instant $i$ in which $z^*(i) \in \Omega_f$. 


Define the trajectories $\bar \vu$ and $\bar \vz$ as follows: for $j \in \mathbb I_{0}^{i-1}$,  $\bar u(j)=u^*(j)$  and  $\bar z(j+1) = \hat F(\bar z(j),u^*(j))$, with $z(0)=z$ (which satisfies that $\bar z(j)=z^*(j)$); and  for $j\geq i$, $\bar u(j)=\kappa_f(\bar z(j))$ and $\bar z(j+1) = \hat F(\bar z(j),\kappa_f(\bar z(j)))$.

Define $\bar V_{\Or,tr}(z)$ as the cost associated to the input sequence $\bar \vu$. Then, since $z^*(i) \in \Omega_f$ and taking into account inequality \eqref{eq:IneqCT}, we have that
\begin{eqnarray*}
\hspace{-0.7cm}\bar V_{\Or,tr}(z) &=& \sum_{j=0}^{i-1} L_r(z^*(j),u^*(j))) 
\\&&+ \sum_{j=i}^{N_p} L_r(\bar z(j),\kappa_f(\bar z(j))) \\
&&+  V_{\Or,fr}(\bar z(N_p)) \\
&\leq & \sum_{j=0}^{i-1} L_r(z^*(j),u^*(j))) +  V_{\Or,fr}( z^*(i)).
\end{eqnarray*}
On the other hand,  taking into account that $z^*(N_p) \not \in \Omega_f$
\begin{eqnarray*}
 V^*_{\Or,tr}(z) &=& \sum_{j=0}^{i-1} L_r(z^*(j),u^*(j)))\\
 &&+ \sum_{j=i}^{N_p} L_r( z^*(j),u^*(j))\\
 &&+  V_{\Or,fr}(z^*(N_p)) \\
&> & \sum_{j=0}^{i-1} L_r(z^*(j),u^*(j))) +  \alpha.
\end{eqnarray*}
Moreover, by optimality we have that $\bar V_{\Or,tr}(z) \geq V^*_{\Or,tr}(z)$, and then
\begin{eqnarray*}
\sum_{j=0}^{i-1} L_r(z^*(j),u^*(j))) +  V_{\Or,fr}( z^*(i)) \geq \bar V_{\Or,tr}(z)\\
\geq V^*_{\Or,tr}(z) > \sum_{j=0}^{i-1} L_r(z^*(j),u^*(j))) +  \alpha.
\end{eqnarray*}

From this, we can state that $V_{\Or,fr}( z^*(i))> \alpha$, and hence $z^*(i) \not \in \Omega_f$, which is a contradiction.

Therefore, we have proved that if $z^*(N_p)  \not \in \Omega_f$, then $z^*(j) \not \in \Omega_f$ for all~\mbox{$j\in \mathbb I_{0}^{N_p}$}, which implies that $L_r(z^*(j),u^*(j))>d$. Then, the optimal cost function is such that
$$V^*_{\Or,tr}(z) > N_p d + \alpha.$$

Consequently, it is proven that for any $z$ such that $V^*_{\Or,tr}(z) \leq  N_p d + \alpha$, then~$z^*(N_p) \in \Omega_f \subset \setZ_f$, and therefore the terminal constraint is satisfied.
\end{proof}

{At this point, we have demonstrated sufficient conditions to derive the robust stability of the proposed oracle-based predictive controller in presence of inexact oracles: (i) the control law ensures asymptotic stability in nominal conditions, i.e. assuming that the oracle provides exact predictions; (ii) the oracle provides predictions with bounded errors for any input applied on the system and (iii) the terminal constraint is not active which, as it will be proven next, ensures uniform continuity of the optimal cost function. This result is rigorously stated in the following theorem.}

\begin{theorem} \label{ThmISS}
   {Consider that Assumptions \ref{HipCont}-\ref{as:Morse} and \ref{as:TC1} hold,} then the system \eqref{EqMdl}  controlled by the control law derived from $P_{\Or,t}(z)$ is input-to-state stable w.r.t. the model mismatch signal $w$ in a neighborhood of the equilibrium point $z_s$, $\Omega_r$, for a sufficiently small model mismatch. Besides, the region $\Omega_r$ is enlarged if the prediction horizon increases, i.e.,
    \begin{equation}
        \Omega_r(N_p) \subseteq \Omega_r(N_p+1).
    \end{equation}
\end{theorem}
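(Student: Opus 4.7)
The plan is to combine the nominal stability result of Theorem \ref{ThmASTC} with the inherent-robustness framework for MPC schemes whose state constraints are inactive, following the lines of \cite{LimonTAC06, LimonNMPC08, limon2018nonlinear}, using Proposition \ref{PropErrorOracle} to bound the perturbation induced by the inexact oracle. The key observation is that within a suitable sublevel set of the rotated optimal cost function $V^*_{\Or,tr}$, the terminal constraint \eqref{eq:PO8} is inactive by Lemma \ref{lem:WTC}, so that the optimization problem locally behaves as an MPC with only input constraints, for which inherent robustness results apply.

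First, I would define $\Omega_r$ as the largest sublevel set $\{z : V^*_{\Or,tr}(z) \leq \gamma\}$ strictly contained in $\{z : V^*_{\Or,tr}(z) \leq N_p d + \alpha\}$, with $d,\alpha$ from Lemma \ref{lem:WTC}. In $\Omega_r$, the terminal constraint is inactive and, since the oracle $\Or$ is continuous on the compact domain induced by $\setU$, the predicted dynamics $\hat F$, the rotated stage cost $L_r$, and the rotated terminal cost $V_{\Or,fr}$ are all uniformly continuous. Standard parametric optimization arguments (compactness of $\setU^{N_p}$ together with uniqueness via Assumption \ref{as:uniqueness}) then yield uniform continuity of $V^*_{\Or,tr}(\cdot)$ on $\Omega_r$.

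Second, I would model the mismatch as an additive perturbation of the state transition: under the inexact oracle, the true next state of the augmented regressor satisfies $z(k+1) = \hat F(z(k),u(k)) + w_z(k)$, where only the first component of $w_z(k)$ is non-zero and equals $w(k)$, which by Proposition \ref{PropErrorOracle} obeys $|w(k)|\leq \mu$. Combining the nominal descent inequality \eqref{eq:DecVNO} with uniform continuity of $V^*_{\Or,tr}$, one obtains an ISS-Lyapunov inequality of the form
\begin{equation*}
V^*_{\Or,tr}(z(k+1)) - V^*_{\Or,tr}(z(k)) \leq -L_r(z(k),u^*(0|k)) + \sigma(\|w_z(k)\|),
\end{equation*}
for some $\mathcal K$-function $\sigma$. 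Together with the bounds $\beta_1(\|z-z_s\|)\leq W(z)$ and $W(z)\leq \beta_2(\|z-z_s\|)$ obtained in the proof of Theorem \ref{ThmASTC}, this yields input-to-state stability of the closed-loop system with respect to $w$ inside $\Omega_r$, provided $\mu$ is small enough that recursive feasibility is preserved, i.e., that the trajectory does not leave $\Omega_r$ before entering the ultimate ISS set.

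Finally, the monotonicity $\Omega_r(N_p)\subseteq \Omega_r(N_p+1)$ follows because the threshold $N_p d + \alpha$ of Lemma \ref{lem:WTC} is strictly increasing in $N_p$, enlarging the region where the terminal constraint is inactive, and because extending the horizon can only decrease (or leave equal) the optimal rotated cost of any feasible point, so the sublevel set defining $\Omega_r$ can only grow. The technical obstacle, and the step I would devote most care to, is the recursive-feasibility argument of the previous paragraph: one must choose the inner sublevel $\gamma$ with enough margin from $N_p d + \alpha$ that, for all admissible perturbations bounded by $\mu$, the one-step increase $\sigma(\mu)$ of $V^*_{\Or,tr}$ cannot push $z(k+1)$ out of the region where the terminal constraint remains inactive, which is precisely the interplay between the uniform-continuity modulus $\sigma$, the level gap $N_p d + \alpha - \gamma$, and the admissible mismatch bound $\mu$.
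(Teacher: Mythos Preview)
Your proposal is correct and follows essentially the same route as the paper: use Lemma~\ref{lem:WTC} to identify a region where the terminal constraint is inactive, exploit compactness to obtain uniform continuity of the oracle and the predictor, and then invoke the inherent-robustness result of \cite[Theorem~4]{LimonNMPC08} to conclude ISS, with the monotonicity of $\Omega_r(N_p)$ coming from the growing threshold $N_p d+\alpha$ together with the fact that the optimal rotated cost is non-increasing in $N_p$. The only noteworthy difference is that the paper defines $\Omega_r$ as a sublevel set of the Lyapunov function $W$ from Theorem~\ref{ThmASTC} (using $V^*_{\Or,tr}(z)\leq W(z)$ to inherit inactivity of the terminal constraint), whereas you work directly with a sublevel set of $V^*_{\Or,tr}$ and introduce an explicit margin $\gamma<N_p d+\alpha$ to secure robust invariance; both choices are valid and lead to the same conclusion.
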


\begin{proof}

From Lemma~\ref{lem:WTC}, we have that there exists some positive constants $d$ and $\alpha$ such that for all $z$ satisfying
\begin{equation}
    V^*_{\Or,tr}(z) \leq N_p d + \alpha, \label{eq:RegVN}
\end{equation}
the terminal constraint in $P_{\Or,t}(z)$ is not active.

Define $\Omega_r(N_p)=\{z: W(z) \leq N_p d + \alpha\}$, which is a level set for the Lyapunov function, and hence it is a positive invariant set of the controlled system. Besides, since $W(z) \geq \beta_1(|z-z_s|)$, $\Omega_r$ is compact.  
From this definition we have that for all $z(k)\in \Omega_r$, $V^*_{\Or,tr}(z(k)) \leq W(z(k))  \leq N_p d + \alpha$, and thus the terminal constraint is not active. 



Since $\Omega_r$ is compact and $u$ is bounded, the value of $\Or(z,u)$ is bounded for all $(z,u) \in \Omega_r \times \setU$. Therefore, as the function $\Or(z,u)$ is continuous, it is also uniformly continuous in $\Omega_r \times \setU$, and then the model function \eqref{EqOracle} is uniformly continuous. 
Hence, in virtue of \cite[Theorem 4]{LimonNMPC08}, the system controlled by $\kappa_{\Or,t}(z)$ is ISS with respect to the model mismatch signal $w$ in $\Omega_r$, if the model mismatch is sufficiently small.

Finally, from the definition of $\Omega_r(N_p)$ we have that $\Omega_r(N_p) \subseteq \Omega_r(Np+1)$ since for any feasible $z$, the optimal cost function decreases with $N_p$.
\end{proof}

\begin{remark}
    In the previous section it was proven that for a sufficiently large prediction horizon, the control law without terminal constraint derived from~\eqref{eq:POtc} is nominally stabilizing. Since this optimization problem does not have constraints on the states, following the arguments of Theorem \ref{ThmISS}, the closed-loop system is ISS w.r.t. to the model mismatch.
\end{remark}
\begin{remark}[Online learning] 
Input-to-state stability of the oracle-based EMPC can be extended to the case that the oracle is updated online with fresh data collected during the closed-loop operation of the system. Notice that occasionally, the update of the data set might not lead to an enhancement of the estimation error. Under the assumption that the effect of the update policy on the predictions is bounded by a sufficiently small value, the possible increment on the predicted cost function is also bounded and sufficiently small, and then, the optimal cost function serves also as an ISS-Lyapunov function.\\
This condition can be relaxed if the update policy filters those data points collected that worsen the expected cost function, as proposed in \citep{ManzanoIJRNC20}.
\end{remark}

\section{Case study}\label{sec:casestudy}

This section presents a simulated case study in which the system taken into consideration is the continuously stirred tank reactor (CSTR) introduced in~\cite{FaulwasserIECR19}.

\subsection{The reactor}

 Two consecutive reactions take place: $A\to B$ at a rate~$k_1=\SI{1}{\per\minute}$, and~$B\to C$ at a rate~$k_2=\SI{0.05}{\per\minute}$.
The molar concentrations of~$A$ and~$B$ are denoted~$c_A$ and~$c_B$, and measured in~\si{\kilo\mol\per\cubic\meter}. They evolve according to the following set of ordinary differential equations:
\begin{subequations}
    \begin{eqnarray}
    \hspace{-0.7cm}\frac{dc_A(t)}{dt}&=&\frac{u(t)}{V}(c_{A0}-c_A(t))-k_1c_A(t),\\
    \hspace{-0.7cm}\frac{dc_B(t)}{dt}&=&\frac{u(t)}{V}(-c_A(t))+k_1c_A(t)-k_2c_B(t),\nn\\
    \end{eqnarray}
\end{subequations}
\noindent where~$u$~(\si{\cubic\meter\per\minute}) is the manipulable input: the feed flow rate (note that the same flow exits the tank, so the volume remains constant,~$V=\SI{1}{\cubic\meter}$), and~$c_{A0}=\SI{1}{\kilo\mol\per\cubic\meter}$ is concentration of~$A$ in the input flow. The system is discretized with sampling time~$\tau_s=\SI{0.25}{\minute}$.

The economic cost function accounts for the use of reactant and the benefits of the product:
\begin{equation}
    \ell(x(k),u(k))=u(k)(c_{A0}-\alpha c_B(k)),
\end{equation}
where the weighting term is set to~$\alpha=4$ (in the appropriate dimensions). Existing noise in the measurements follows a normal distribution of zero mean and standard deviation of~$2\%$ of the measurement.
The system is subject to input constraints, such that~$0< u\leq \SI{2}{\cubic\meter\per\minute}$. The set of equilibrium points is represented in Figure~\ref{fig:k_est}. Its minimum is obtained for~$u_s=\SI{1.043}{\cubic\meter\per\minute}$, resulting in~$\ell_s=\SI{-0.957}{\$\per\minute}$.

\begin{figure}[t]
    \centering
    \includegraphics[width=1\columnwidth]{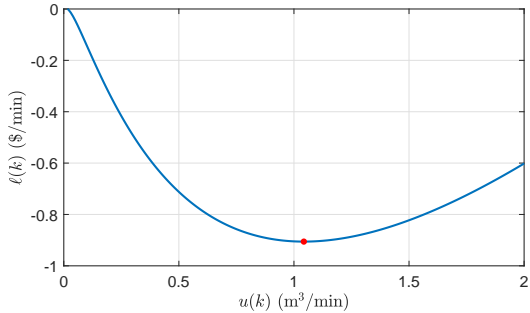}
    \caption{Set of equilibrium costs.}
    \label{fig:k_est}
\end{figure}

\subsection{Ideal estimation}

To replicate the nominal conditions studied in the paper, we seek to obtain an oracle that provides an exact estimation, in order to compare the control actions that the standard EMPC and the oracle-based EMPC yield, and to illustrate that they are the same, as proven in Theorem~\ref{th:eq}.

To this end, consistent states (as introduced in Definition~\ref{def:consistent}) must be obtained, and applied to the mentioned controllers. To do so we proceed as follows. An initial state is chosen randomly, following an uniform distribution, within the workspace (both concentrations among~$0$ and~\SI{1}{\kilo\mol\per\cubic\meter}). A sequence of control actions of length~$N_p=\max(n_a,n_b)$, with~$n_a=3,\,n_b=2$ is chosen, each element drawn randomly from~$\setU$ using an uniform distribution. This sequence is applied to the real system, which yields the initial state~$x_0$, consistent with the initial regressor~$z_0$, given the measured costs. This setup is represented in Figure~\ref{fig:consistent}.

\begin{figure}[t]
    \centering
    \includegraphics[width=1\columnwidth]{./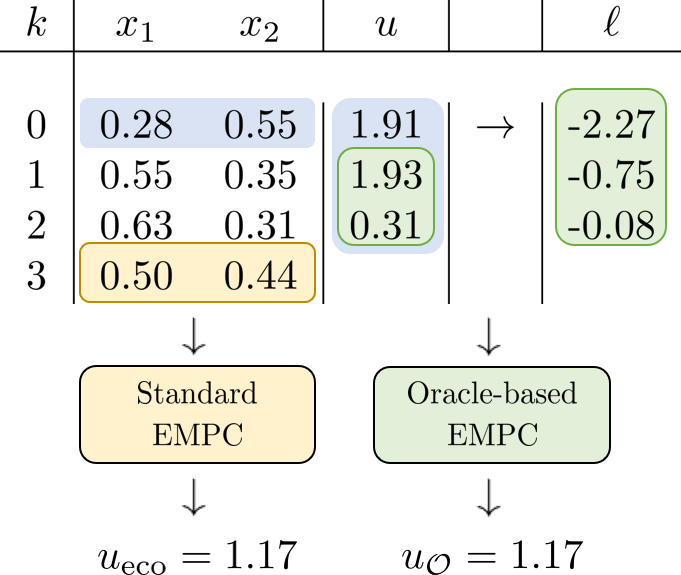}
    \caption{One experiment of the procedure to compare standard and oracle-based economic predictive controllers. Blue values are initial conditions obtained randomly, which yield the consistent states~$x_0$ (in yellow, fed to the standard EMPC) and~$z_0$ (in green, fed to the oracle-based EMPC). Building an exact oracle and applying~$P^e_N(x_0)$ and~$P^e_\Or(z_0)$ yield the same control action.}
    \label{fig:consistent}
\end{figure}

Next, the standard EMPC with terminal equality constraint given by the optimization problem~\eqref{eq:PNe} is applied from~$x_0$, with prediction horizon~$N=5$ and run for a single time step. Every time the solver in the optimization problem (MATLAB\textsuperscript{\textregistered}'s {\tt fmincon} function) obtains the predicted trajectory and cost for a candidate sequence~$\mathbf u$, the resulting regressors are stored in a data set~$\setD$.

Once a data set is obtained, the kinky inference (KI) class of learning rules is applied to build the oracle. Kinky inference~\cite{calliess2014_thesis} is a class of non-parametric learning methods based on Lipschitz interpolation, which have proven able to conform a valid model for nonlinear learning-based predictive controllers~\cite{manzano2020robust}. Further details of the functioning of KI can be found in~\cite{manzano2019output}. Given its non-parametric nature, it is guaranteed that in the absence of noise the prediction over a given query~$q$ already contained in the data set is its corresponding real output~$f(q)$, which is key for the exact estimation sought in this paper. Besides, there is no tuning process, apart from identifying the Lipschitz constant of the ground truth function, which is overweighted to~$L=100$ in order to ensure sample consistency.

Once the oracle~\eqref{EqOracle} is built, the oracle-based MPC problem~\eqref{eq:PO} is applied from~$z_0$, again with~$N=5$ and for a single simulation step. It is important to remark that despite the way the data set was obtained, the only information given to the oracle (and to the optimization problem) is measurements of past costs and input signals, as we assumed in this paper. Hence, note that neither the states nor the control law of the standard EMPC have been accessible to the oracle, as represented in Figure~\ref{fig:consistent}.

This setup is repeated a hundred times, with different initial random states and input sequences. The resulting optimal control actions are the same in all cases, as it was stated in Theorem~\ref{th:eq}, $\kappa_\Or^e(z(k))=\kappa_{\textup{eco}}^e(x(k))$ for all~$k$, proving the statements of this paper.

\subsection{Closed-loop example}

The previous example illustrates that under perfect estimation, the control decision of the proposed oracle-based EMPC is the same as the one that would be obtained with a standard EMPC. However, the conditions to obtain the data set are artificial and not practical. Next, we practically demonstrate that an oracle derived from realistic input-output data sets can be obtained, and that  the proposed oracle-based economic MPC stabilizes the real plant in presence of estimation errors.

To this end, a chirp signal is designed with initial and final frequencies of~\num{1d-6} and~\SI{0.3}{\per\minute}, respectively; and amplitude and center of~\SI{1}{\cubic\meter\per\minute}, of~\num{40d3} samples of length. This signal is applied as control input to the plant, yielding the data set of costs-inputs represented in Figure~\ref{fig:data_chirp}. The regression state is built with~$n_a=3,\,n_b=2$.

\begin{figure}[t]
    \centering
    \includegraphics[width=1\columnwidth]{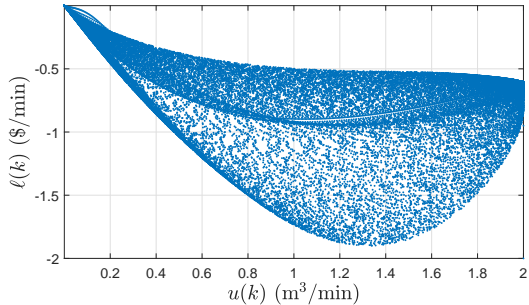}
    \caption{Training data set, obtained with a chirp input signal.}
    \label{fig:data_chirp}
\end{figure}

A kinky inference predictor is built with these data. In order to asses its performance, another set of experiments is carried out, applying a pseudorandom signal to the plant that consists of a sequence of steps of value~$0\leq u\leq 2$ (\si{\cubic\meter\per\minute}) and length among~\SI{30}{\second} and~\SI{5}{\minute}. The validation test represented in Figure~\ref{fig:val} shows the performance and the prediction error.

Then, the oracle-based MPC~\eqref{eq:PO} is applied, from one hundred initial points consisting of the regressor obtained applying a random input sequence to the plant on a random state, and measuring costs. The results are shown in Figure~\ref{fig:cl}. Besides, in order to compare the performance to the ideal case, the standard EMPC~\eqref{eq:PNe} is applied to the same 100 initial states. Note that since the estimation is not perfect, the response differ, although the closed-loop performance is very similar. Such performance, represented in Figure~\ref{fig:cl_hist}, is assessed by the index:
\begin{equation}\label{eq:phi}
    \Phi=\sum_{i=1}^{t_\mathrm{sim}} \ell(i).
\end{equation}

\begin{figure}[t]
	\centering
	\subfloat[Validation experiment]{\includegraphics[width=1\columnwidth]{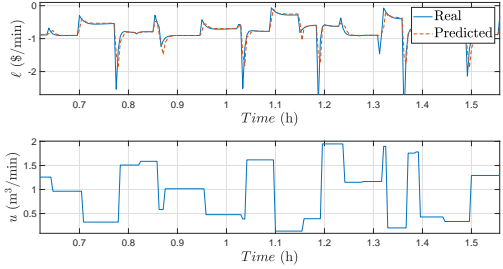}}\hfil
	\subfloat[Histogram of the prediction error]{\includegraphics[width=1\columnwidth]{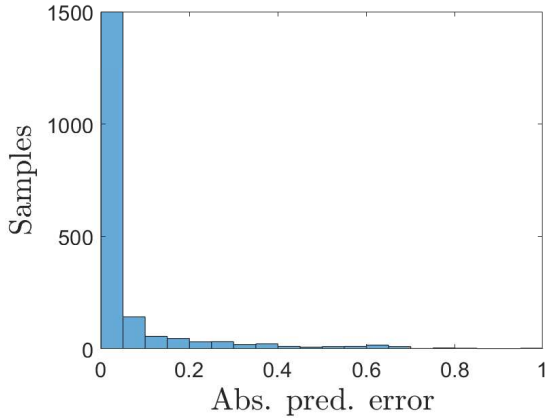}}\hfil
	\caption{Validation of the predictor} \label{fig:val}
\end{figure}

\begin{figure}[t]
	\centering
	\subfloat[Closed-loop]{\includegraphics[width=1\columnwidth]{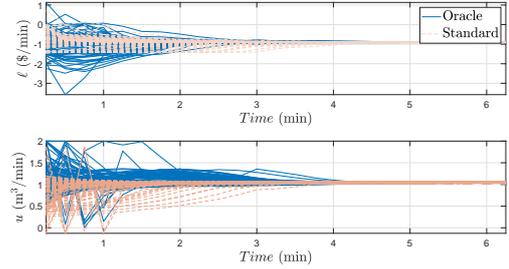}}\hfil
	\subfloat[Performance index]{\includegraphics[width=1\columnwidth]{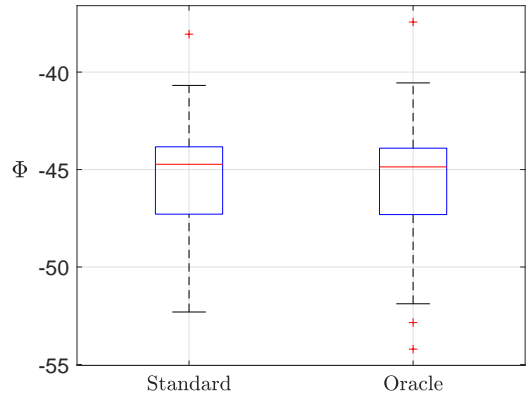}\label{fig:cl_hist}}\hfil
	\caption{Closed-loop performance of the oracle-based EMPC (blue solid lines) and the standard ideal state-feedback EMPC (orange dashed lines), for 100 initial random states.} \label{fig:cl}
\end{figure}

\subsection{Online learning}

Using the same setup as in the previous case, the system is disturbed with an impulse, periodically every~\SI{10}{\minute}, returning to the same initial state. At that same time step, the data set~$\setD$ is updated, including the set of regressors and costs measured during the transient. The kinky inference technique employed to learn the oracle can be updated recursively without a cumbersome tuning. The recalculation of the Lipschitz constant can be done recursively in~$\mathcal O(n_\setD)$~\cite{calliess2020lazily}, being~$n_\setD=40$ the number of new data points.

The performance index~\eqref{eq:phi} is computed for each one of the iterations, that is, during the interval of \SI{10}{\minute} after the impulse when the system is regulated to the optimal equilibrium point rejecting its effect. The value of this index for every iteration is shown in Figure~\ref{fig:online}. Note that, in general, the inclusion of past measurements in an online learning fashion enhances the performance, yielding better closed-loop responses, closer to the ideal state-feedback ones.

\begin{figure}[t]
    \centering
    \includegraphics[width=1\columnwidth]{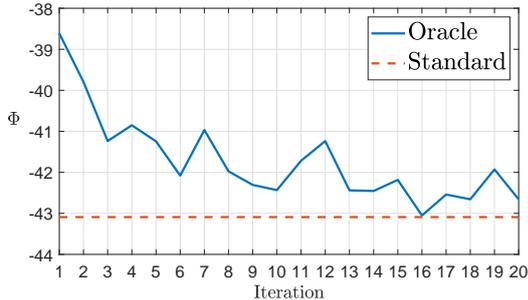}
    \caption{Performance index each iteration, i.e. during the regulation between disturbing the system, including past measurements in the oracle online.}
    \label{fig:online}
\end{figure}

\section{Conclusion}
The objective of this paper was to prove that an economic predictive controller can be designed for a system whose model is unknown, under the assumption that only the economic cost of the plant is observed. This set-up can be encountered in many situations in which the client is not willing to provide further details of the intern operation of his plant, or when the system is too complex to be modeled. 

The evolution of the economic cost is predicted using an oracle, i.e., a data-based prediction model of the cost that is learnt from a set of historical input-cost measurements of past trajectories. The structure of the oracle is a NARX model, and the prediction is done with a chosen machine learning technique.

First, it is demonstrated that under mild assumptions on the economic cost function, an oracle can be obtained for generic nonlinear plants. Then, it is proven that in nominal conditions, an oracle-based economic MPC with terminal equality constraint provides the same control law as a state-space prediction model. Besides, stabilizing design of oracle-based economic MPC using terminal ingredients is presented. Finally, it has been formally proven that the resulting control law can stabilize the plant in a real setting, where the oracle may exhibit estimation error.

These results demonstrate that under mild assumptions on the model of the plant and on the economic cost function, an oracle of the economic cost suffices to design economic predictive controllers with the same good properties of the one based on state-space.

These properties have been demonstrated in an realistic illustrative case study where a CSTR is regulated with an economic predictive control  based on first principle model and on an appropriate oracle derived using kinky inference from input-output data of the plant.


\section*{Acknowledgments}
	The authors would like to thank the funds received by the MINECO Spain and Feder Funds under contract DPI2016-76493-C3-1-R and Ministerio de Ciencia e Innovación of Spain under project PID2019-106212RB-C41.

\appendix
\section{Existence of a NARX model}
Consider a discrete-time system described by
\begin{eqnarray}
x(k+1)&=&f(x(k),u(k)) \label{ap:sys1}\\
y(k)&=& h(x(k)), \label{ap:sys2}
\end{eqnarray}
where $x \in \mathcal X \subset \R^n$, $u \in \mathcal U \subset \R^m$ and $y \in \mathcal Y \subset \R^p$, being $\mathcal X$, $\mathcal U$ and $\mathcal Y$ compact sets. The functions $f(\cdot,\cdot)$ and $h(\cdot)$ that define the model are considered to be unknown, and the state not measurable. The only measures available for this system are the input $u(k)$ and the output $y(k)$.

In this section we summarize the results presented by Levin and Narendra \footnote{In order to present the Levin and Narendra's results in a form closer to the notation used in this paper, we present them taking the sequence of inputs and outputs as a collection of past measurements, contrary to the original paper where these were defined as sequence of future values.} in \cite{LevinCAP97}, on the existence of an input-output description of the system by means of a nonlinear autoregresive with exogenous signal (NARX) model, of the form
\begin{equation}
y(k+1)=\mathcal F(\vy_M(k),\vu_M(k)), \label{ap:NARX}
\end{equation}
where
\begin{align} 
\vy_M(k)=(y(k),y(k-1),\cdots,y(k-M)) \subset \mathcal Y^{M+1}\\
\vu_M(k)=(u(k),u(k-1),\cdots,u(k-M))\subset \mathcal U^{M+1}.
\end{align}

The existence of the NARX model is closely related with the notion of global observability of the system. Levin and Narendra define the observability map as\footnote{Formally, the sequence $\vy_M(k)$ depends on $\vu_{M-1}(k-1)$, but this is extended by Levin and Narendra w.l.o.g. to $\vu_M(k)$, for the sake of simplicity of the presented results.}
\begin{equation}
\vy_M(k)=\mathcal H_M(x(k-M),\vu_{M}(k))
\end{equation}

Then, a system is said to be observable if for any two distinct states $x_1$ and $x_2$, there exists an input sequence  $\vu_{M}$ such that the corresponding output sequences are distinct, i.e.  $\mathcal H_M(x_1,\vu_{M})\neq \mathcal H_M(x_2,\vu_{M})$. This can be read as the map $\mathcal H_M(x,\vu_{M})$ being injective in $x$.

If this property holds for any input sequence $\vu_{M}$, then the system is said to be strongly observable, and if it holds for almost every input sequence $\vu_{M}$, then the system is called generically observable. In this case, the set of input sequences for which this property does not hold $\setU_{no}$ is of measure zero.

\begin{assumption} \label{ap:Assumpt1}
The system given by \eqref{ap:sys1} and \eqref{ap:sys2} satisfies the following conditions:
\begin{enumerate}
\item The model functions $f$ and $h$ are smooth.
\item The system \eqref{ap:sys1} is state invertible, i.e. for a given $u$, $f(x,u)$ defines a diffeomorphism on $x$.
\item The function $h$ is a Morse function, i.e. its critical points are non-degenerate (see Definition~\ref{def:morse} on page~\pageref{def:morse}).
\end{enumerate}
\end{assumption}

The second condition is naturally met by continuous-time sampled systems controlled by a discrete-time measurable control law function and a zero-order holder.

Under this assumption, Levin and Narendra derived Theorem 6 in~\cite{LevinCAP97}, on the existence of an input-output model. This Theorem is rewritten here as follows:

\begin{theorem} \label{ap:ThmNARX}
Let a system be defined by \eqref{ap:sys1} and \eqref{ap:sys2} satisfying Assumption~\ref{ap:Assumpt1}. Then, for $M\geq 2n$ and for any $\epsilon>0$, there exists a set of sequences of inputs $\mathcal \setU_M^\epsilon \subseteq \setU^{M+1}$ of measure $\mu(\mathcal \setU_M^\epsilon)< \epsilon$, such that\\
\begin{enumerate}
\item The system defined by \eqref{ap:sys1} and \eqref{ap:sys2} {is observable for any sequence~{$\vu_M \not \in \setU_M^\epsilon$}}.
\item  There exists  a continuous and bijective map $\Phi$ such that for all $x(k-M) \in \mathcal X$ and $\vu_M(k) \in \mathcal \setU^{M+1} \setminus \mathcal \setU_M^\epsilon$ we have  
\begin{equation}
(x(k-M),\vu_{M}(k))=\Phi(\vy_M(k),\vu_{M}(k)).
\end{equation}
\item There exists a continuous function $\mathcal F(\cdot,\cdot)$  such that for all input sequences $\vu_M(k) \in \mathcal \setU^{M+1} \setminus \mathcal \setU_M^\epsilon$ we have
\begin{eqnarray}
y(k+1)&=&\mathcal F(\vy_M(k),\vu_M(k)).
\end{eqnarray}

\item There exists a continuous function $\mathcal{\tilde F}$ such that
for all input sequences $\vu_M(k) \in \mathcal \setU^{M+1}$ we have
{
$$\| \mathcal F(\vy_M(k),\vu_M(k))-  \mathcal{\tilde F}(\vy_M(k),\vu_M(k))\| < \epsilon.$$
}
\end{enumerate}
\end{theorem}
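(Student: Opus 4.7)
All four claims hinge on a single injectivity statement for the observability map $\mathcal H_M(\cdot,\vu_M): \mathcal X \to \mathcal Y^{M+1}$ defined by
\[
\mathcal H_M(x,\vu_M) = \bigl(h(x_0),\, h(x_1),\, \ldots,\, h(x_M)\bigr),
\]
with $x_0 = x$ and $x_{j+1} = f(x_j, u_{M-j})$, i.e.\ the $M+1$ outputs generated by running the dynamics forward from $x$ under the inputs encoded in $\vu_M$. The plan is to use a Takens-type transversality argument (exploiting state invertibility of $f$ and the Morse property of $h$) to show that for $M \ge 2n$ this map is injective for all input sequences outside a set of arbitrarily small Lebesgue measure, and then derive the remaining claims by compactness, composition, and extension.

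For claims (1) and (2), I would study the pair map
\[
\Delta_M(x_1, x_2, \vu_M) = \mathcal H_M(x_1,\vu_M) - \mathcal H_M(x_2,\vu_M)
\]
on the off-diagonal subset of $\mathcal X \times \mathcal X$. Because $h$ is Morse and each $f(\cdot,u)$ is a diffeomorphism, combining these structures yields that the full differential of $\Delta_M$ in the joint variable $(x_1,x_2,\vu_M)$ is surjective onto $\mathbb{R}^{p(M+1)}$ at zeros of $\Delta_M$, so $0$ is a regular value. A parametric transversality argument (equivalently, Sard's theorem applied to the projection $\Delta_M^{-1}(0) \to \mathcal U^{M+1}$) then shows that the generic fiber has formal dimension $2n - p(M+1) < 0$ for $M\ge 2n$, hence is empty; consequently the set of ``bad'' input sequences for which the off-diagonal preimage of $0$ is non-empty has Lebesgue measure zero. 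By outer regularity of Lebesgue measure this zero-measure set can be enclosed in an open $\mathcal U_M^\epsilon \subseteq \mathcal U^{M+1}$ of measure $<\epsilon$. For any $\vu_M \notin \mathcal U_M^\epsilon$, $\mathcal H_M(\cdot,\vu_M)$ is a continuous injection of the compact set $\mathcal X$ into $\mathcal Y^{M+1}$, hence a homeomorphism onto its image; this delivers observability (claim (1)) and, together with joint continuity of the resulting family of inverses, the continuous bijective map $\Phi$ of claim (2).

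Claim (3) is then a short composition: given $(\vy_M(k),\vu_M(k))$ with $\vu_M(k) \notin \mathcal U_M^\epsilon$, recover $x(k-M)$ via $\Phi$, iterate $f$ forward $M+1$ steps using the inputs in $\vu_M(k)$, and apply $h$ to obtain $y(k+1) = \mathcal F(\vy_M(k),\vu_M(k))$, continuous as a composition of continuous maps. For claim (4), $\mathcal F$ is continuous on the closed complement of the open set $\mathcal U_M^\epsilon$; by the Tietze extension theorem it admits a continuous extension $\tilde{\mathcal F}$ to all of $\mathcal Y^{M+1} \times \mathcal U^{M+1}$. The uniform $\epsilon$-bound in claim (4) follows by choosing the extension via a partition of unity against nearby observable sequences together with uniform continuity of the exact output map on the compact domain $\mathcal X \times \mathcal U$, so that the perturbation of $\vu_M$ required to push it back into the observable region (which can be made to vanish with $\mu(\mathcal U_M^\epsilon)$) translates into a prediction error no larger than $\epsilon$.

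The main obstacle I anticipate is the transversality step. Unlike classical Takens embedding (where the observation function itself is perturbed to achieve genericity), here $h$ is fixed and merely Morse, so the required rank condition for $\Delta_M$ must be extracted entirely from variations of the inputs through the orbit structure of the control system. Verifying that the iterated Jacobians $\partial h / \partial x \cdot \prod_i \partial f / \partial x$ span enough directions, via the input parameters, to force a \emph{measure-zero} (not merely nowhere-dense) bad set is the technical heart of Levin--Narendra's Theorem~6, and is the step I would expect to require the most care.
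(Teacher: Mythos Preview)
Your proposal is mathematically reasonable, but it is doing far more work than the paper does. In the paper, Theorem~\ref{ap:ThmNARX} is not proved from scratch at all: it is presented as a repackaging of results from Levin and Narendra~\cite{LevinCAP97}, and the ``proof'' consists entirely of pointers---statement~(1) to the opening of their proof of Theorem~6, statement~(2) to their Theorem~5, statement~(3) to the first part of their Theorem~6, and statement~(4) to the second part of that same proof, where $\tilde{\mathcal F}$ is realized concretely as a multilayer feedforward neural network via universal approximation. There is no transversality computation, no Sard argument, and no Tietze extension in the paper itself.

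What you have written is, in effect, a sketch of how one might reprove Levin--Narendra's Theorems~5 and~6 directly. The Takens-style transversality route you outline for (1)--(2) is indeed the spirit of their argument, and you correctly flag the delicate point: genericity here must come from varying the \emph{inputs} rather than the observation map, and the Morse hypothesis on $h$ together with state invertibility of $f$ is exactly what makes this work. Your derivation of (3) by composition matches theirs. For (4) you diverge: you propose a Tietze extension plus uniform continuity, which would actually give $\tilde{\mathcal F}=\mathcal F$ on the observable set (error zero there), whereas Levin--Narendra obtain $\tilde{\mathcal F}$ as a neural-network approximant that is globally defined but only $\epsilon$-close. Both routes yield the stated bound; yours is arguably cleaner, theirs is constructive in a sense useful for learning.

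So: no genuine gap, but a genuinely different level of ambition. The paper defers entirely to~\cite{LevinCAP97}; you are sketching the underlying analysis. If your goal is to match the paper, a citation suffices. If your goal is a self-contained argument, the transversality step you identify is indeed the crux and would need to be fleshed out carefully.
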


\begin{proof}
{First of all, notice that from~\cite[Theorems~4]{LevinCAP97}, it is derived that the generic observability property holds for almost all systems satisfying Assumption~\ref{ap:Assumpt1}.}

The first statement is proven in the first line of the proof of \cite[Theorem~6]{LevinCAP97}.  The existence of the map $\Phi$ in the second statement is demonstrated  in~\cite[Theorem 5]{LevinCAP97}. The existence of $\mathcal F$ in the third statement corresponds to statement 1 of~\cite[Theorem 6]{LevinCAP97}. The last statement is derived from the proof of the second part of \cite[Theorem 6]{LevinCAP97}, for the case in which the continuous function~$\mathcal{\tilde F}$ is given by a multilayer feedforward neural network with a sigmoidal function as activation function of each neuron.
\end{proof}


{
The following corollary proves that there exists an input-output system such that its estimation error w.r.t. the real output signal is arbitrarily small for any sequence of inputs, irrespective if they are observable or not.}

{
\begin{corollary}
Under the assumptions of Theorem \ref{ap:ThmNARX}, there exists a continuous function $\mathcal{\tilde F}$ such that for all input sequences $\vu_M(k) \in \mathcal \setU^{M+1}$, $$\| y(k+1) -  \mathcal{\tilde F}(\vy_M(k),\vu_M(k))\| < \Theta(\epsilon),$$
for a $\mathcal K$-function $\Theta(\cdot)$.
\end{corollary}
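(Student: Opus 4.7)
The plan is to combine statements 3 and 4 of Theorem~\ref{ap:ThmNARX} with a perturbation argument that extends the $\epsilon$-bound on $\|\mathcal F - \mathcal{\tilde F}\|$ from the observable set $\setU^{M+1}\setminus\setU_M^\epsilon$ to the whole of $\setU^{M+1}$. For any $\vu_M \in \setU^{M+1}\setminus\setU_M^\epsilon$ the result is essentially immediate: $y(k+1)=\mathcal F(\vy_M(k),\vu_M(k))$ by statement~3, and statement~4 yields $\|y(k+1)-\mathcal{\tilde F}(\vy_M(k),\vu_M(k))\|<\epsilon$. The real work lies in the complementary case $\vu_M\in\setU_M^\epsilon$, where the NARX description fails exactly.

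For the hard case, I would mimic the strategy used in Proposition~\ref{PropErrorOracle}. Fix any $\vu_M\in\setU_M^\epsilon$ and let $x(k-M)$ be the current state. Because $\setU^{M+1}$ is compact and the measure $\mu(\setU_M^\epsilon)<\epsilon$, a standard volumetric argument shows that any point of $\setU_M^\epsilon$ lies within distance at most $\theta_u(\epsilon)$ of its complement, with $\theta_u$ a $\mathcal K$-function (concretely, if $\vu_M$ were at distance $>r$ from the complement, then $B(\vu_M,r)\subseteq\setU_M^\epsilon$, forcing $r\leq C\epsilon^{1/(m(M+1))}$). Pick such a neighbour $\tilde\vu_M\in\setU^{M+1}\setminus\setU_M^\epsilon$.

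Next, by Assumption~\ref{ap:Assumpt1} the maps $f$ and $h$ are smooth and the domain is compact, so forward-propagating $x(k-M)$ under $\tilde\vu_M$ produces an output history $\tilde\vy_M(k)$ and a next output $\tilde y(k+1)$ that are $\mathcal K$-close to $\vy_M(k)$ and $y(k+1)$ respectively; denote these bounds $\theta_y(\epsilon)$. Since $\tilde\vu_M$ is in the observable set, statement~3 gives $\tilde y(k+1)=\mathcal F(\tilde\vy_M(k),\tilde\vu_M)$, and statement~4 gives $\|\mathcal F(\tilde\vy_M(k),\tilde\vu_M)-\mathcal{\tilde F}(\tilde\vy_M(k),\tilde\vu_M)\|<\epsilon$. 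Finally, because $\mathcal{\tilde F}$ is continuous on the compact set $\mathcal Y^{M+1}\times\setU^{M+1}$ it is uniformly continuous, so
\[
\|\mathcal{\tilde F}(\tilde\vy_M(k),\tilde\vu_M)-\mathcal{\tilde F}(\vy_M(k),\vu_M)\|\leq \theta_{\tilde F}\!\bigl(\theta_y(\epsilon)+\theta_u(\epsilon)\bigr),
\]
for some $\mathcal K$-function $\theta_{\tilde F}$. Chaining these four inequalities with the triangle inequality yields
\[
\|y(k+1)-\mathcal{\tilde F}(\vy_M(k),\vu_M)\| \leq \theta_y(\epsilon)+\epsilon+\theta_{\tilde F}\!\bigl(\theta_y(\epsilon)+\theta_u(\epsilon)\bigr)=:\Theta(\epsilon),
\]
which is manifestly a $\mathcal K$-function of $\epsilon$ and subsumes the first (observable) case.

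The main obstacle is the quantitative step relating the measure bound $\mu(\setU_M^\epsilon)<\epsilon$ to a uniform \emph{distance} bound $\theta_u(\epsilon)$ from any bad input sequence to the observable set; the volumetric covering argument sketched above suffices because the ambient set is compact and finite-dimensional. Everything else is a routine triangle-inequality assembly, relying on smoothness of $f,h$ (to propagate perturbations of the input into perturbations of the output), uniform continuity of $\mathcal{\tilde F}$ on its compact domain, and the two bounds already provided by Theorem~\ref{ap:ThmNARX}.
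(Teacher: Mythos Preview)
Your argument is correct and is essentially the same perturbation strategy the paper uses: the paper's proof simply invokes Proposition~\ref{PropErrorOracle} as a black box (taking $\mathcal F$ itself as a continuous oracle with $\tilde\mu=0$ on the observable set), whereas you unpack that same proposition's proof explicitly and apply it to $\mathcal{\tilde F}$ instead, picking up an extra additive $\epsilon$ from statement~4 that is harmlessly absorbed into $\Theta(\epsilon)$. The only cosmetic difference is the choice of $\mathcal{\tilde F}$ versus $\mathcal F$ as the target function; both work, and your version has the mild advantage of making the continuity of the approximator on all of $\mathcal Y^{M+1}\times\setU^{M+1}$ explicit.
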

}
{
\begin{proof}
This can be proven as a consequence of Proposition \ref{PropErrorOracle} (on page~\pageref{PropErrorOracle}), taking~$\mathcal F(Y_M,U_M)$ as a continuous oracle function which provides a null estimation error (i.e. $\mu=0$). Therefore, from this Proposition we infer that 
$$ |y(k+1) - \mathcal F(\vy_M(k),\vu_M(k)|\leq \Theta(\epsilon).$$
\end{proof}
}

{Last, the following corollary extends the formulation of Theorem~\ref{ap:ThmNARX} to feed-through systems, as they are considered in this paper.}

\begin{corollary} \label{ap:CorNARX}
Assume that the output of the system $y(k)$ depends explicitly on $u(k)$, i.e.
\begin{equation}
y(k)=h(x(k),u(k)),
\end{equation}
such that $h$ is Morse in its first argument, i.e., for any critic point $(x_a,u_a)$ of $h$ its Hessian in $x$ is nonsingular. Then, Theorem \ref{ap:ThmNARX} holds considering in the third claim the following NARX model:
\begin{eqnarray}
y(k+1)&=&\mathcal F(\vy_M(k),\vu_{M+1}(k+1)).
\end{eqnarray}
\end{corollary}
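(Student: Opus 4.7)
The strategy is to adapt the proof of Theorem \ref{ap:ThmNARX} directly to the feed-through setting rather than attempting a state-augmentation reduction (which fails to preserve state invertibility, since the obvious augmented state $\bar x=(x,u)$ would make $\bar f$ non-injective in its last component). Define the feed-through observability map
$$ \vy_M(k) = \mathcal H_M^{ft}(x(k-M),\vu_M(k)) $$
by iterating the state transition $f$ from $x(k-M)$ with the inputs in $\vu_M(k)$ and applying $h(\cdot,u(k-j))$ at each step to produce $y(k-j)$. Since $h$ is assumed Morse in its first argument for every fixed value of the second, the transversality arguments used by Levin and Narendra go through essentially unchanged: for each $\epsilon>0$, the set of input sequences for which $\mathcal H_M^{ft}(\cdot,\vu_M)$ fails to be injective has measure less than $\epsilon$, and for the remaining input sequences there is a continuous bijective inverse $\Phi^{ft}$ with $(x(k-M),\vu_M(k))=\Phi^{ft}(\vy_M(k),\vu_M(k))$. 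This establishes claims~1 and~2 of Theorem~\ref{ap:ThmNARX} in the feed-through setting.

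The NARX model of claim~3 is then built by composition: recover $x(k)$ by iterating $f$ from $\Phi^{ft}(\vy_M(k),\vu_M(k))$ with the inputs $u(k-M+1),\ldots,u(k-1)$ contained in $\vu_M(k)$, propagate one more step to $x(k+1)=f(x(k),u(k))$, and then apply the output map to obtain $y(k+1)=h(x(k+1),u(k+1))$. Because this last step involves $u(k+1)$, the resulting continuous NARX function depends on the extended sequence $\vu_{M+1}(k+1)$ rather than on $\vu_M(k)$, which is exactly the form stated in the corollary. Claim~4 (approximation of $y(k+1)$ by a continuous function over \emph{all} input sequences, including the non-observable ones) carries over verbatim, combining the first part with Proposition~\ref{PropErrorOracle} as in the preceding corollary derived from Theorem~\ref{ap:ThmNARX}.

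The main technical point to verify is that the transversality/genericity argument tolerates the additional parametric dependence of the output map on $u$. This is handled by a Fubini-type argument on the product $\mathcal X\times\mathcal U^{M+1}$: for each fixed input sequence, the Morse hypothesis on $h(\cdot,u)$ furnishes the required transversality in $x$, and since this property holds uniformly over $u\in\mathcal U$, integrating over input sequences produces the desired measure-zero set of non-observable sequences. All other ingredients of the Levin--Narendra proof, notably smoothness and state invertibility of $f$, are unchanged by the feed-through structure, so no further modifications are needed.
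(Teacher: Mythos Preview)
Your approach is essentially the same as the paper's: both adapt the Levin--Narendra argument directly to the feed-through observability map rather than via state augmentation, observe that the Morse hypothesis on $h(\cdot,u)$ in the first argument preserves the genericity result, and obtain the extended input dependence $\vu_{M+1}(k+1)$ from the extra application of $h(\cdot,u(k+1))$ at the final step. The paper's proof is terser and more reference-bound: instead of your transversality/Fubini description, it simply points to equations~(30)--(31) in the proof of Lemma~1 of \cite{LevinCAP97}, noting that (30) (the Jacobian of the observability map) changes under feed-through but that (31) (the rank condition) survives by the chain rule; your argument and theirs are two descriptions of the same mechanism.
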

\begin{proof}
Notice that in this case, the considered observability map is valid, making most of the results immediate.  In the proof of Lemma 1 in~\cite{LevinCAP97}, equation (30) should be modified accordingly, but the subsequent equation (31) holds true in virtue of the chain rule, and therefore, the proof of the lemma follows. The NARX model is derived taking into account that $y(k+1)$ can be written as a continuous function of the state $x(k-M)$ and the sequence $\vu_{M+1}(k+1)$.
\end{proof}

\bibliography{bibPepe}

\end{document}